\newtheorem{lemma}{Lemma}[section]
\newtheorem{theorem}[lemma]{Theorem}
\newtheorem{corollary}[lemma]{Corollary}
\newtheorem{proposition}[lemma]{Proposition}
\newcommand{\eps}{\epsilon}
\newcommand{\setf}{\textrm{\sc SETF}\xspace}
\newcommand{\srpt}{\textrm{\sc SRPT}\xspace}
\newcommand{\rr}{\textrm{\sc  RR}\xspace}
\newcommand{\wrr}{\textrm{\sc  WRR}\xspace}
\newcommand{\laps}{\textrm{\sc  LAPS}\xspace} 
\newcommand{\wlaps}{\textrm{\sc  WLAPS}\xspace}
\newcommand{\J}{{\cal J}}
\newcommand{\initOneLiners}{%
    \setlength{\itemsep}{0pt}
    \setlength{\parsep }{0pt}
    \setlength{\topsep }{0pt}
%      \usecounter{myLISTctr}
}
\renewcommand{\S}{\mathcal{J}}
\newcommand{\W}{\mathcal{W}}
\renewcommand{\P}{\mathcal{P}}
\newcommand{\lpp}{\mbox{$\mathsf{LP}_\mathsf{primal}$}\xspace}
\newcommand{\lpd}{\mbox{$\mathsf{LP}_\mathsf{dual}$}\xspace}
\newcommand{\cpp}{\mbox{$\mathsf{CP}_\mathsf{primal}$}\xspace}
\newcommand{\cpd}{\mbox{$\mathsf{CP}_\mathsf{dual}$}\xspace}
\newcommand{\slaps}{\textrm{\sc WRPS}\xspace} 
\begin{document}
\title{{\sc SelfishMigrate}: A Scalable Algorithm for Non-clairvoyantly Scheduling Heterogeneous Processors}
\author{Sungjin Im\thanks{ Electrical Engineering and Computer Science Department, University of California, 5200 N. Lake Road, Merced CA 95344. {\tt sim3@ucmerced.edu}.}  \and Janardhan Kulkarni \thanks{Department of Computer Science, Duke University , 308 Research Drive, Durham, NC 27708. {\tt kulkarni@cs.duke.edu}. Supported by NSF Awards CCF-1008065 and IIS-0964560.}  \and
Kamesh Munagala\thanks{Department of Computer Science, Duke University, Durham NC 27708-0129. {\tt kamesh@cs.duke.edu}. Supported by an award from Cisco, and by NSF  grants CCF-0745761, CCF-1008065, CCF-1348696, and IIS-0964560.} \and
Kirk Pruhs\thanks{Department of Computer Science, University of Pittsburgh, Pittsburgh, PA 15260. {\tt  kirk@cs.pitt.edu}.Supported in part by NSF grants CCF-1115575, CNS-1253218, and an IBM  Faculty Award. }
}

\date{}
\maketitle

\thispagestyle{empty}

\begin{abstract}
We consider the classical problem of minimizing the total weighted flow-time for unrelated machines in the online \emph{non-clairvoyant} setting. In this problem, a set of jobs $J$ arrive over time to be scheduled on a set of $M$ machines. Each job $j$ has processing length $p_j$, weight $w_j$, and is processed at a rate of $\ell_{ij}$ when scheduled on machine $i$. The online scheduler knows the values of $w_j$ and $\ell_{ij}$ upon arrival of the job, but is not aware of  the quantity $p_j$.  We present the {\em first} online algorithm that  is {\em scalable} ($(1+\eps)$-speed $O(\frac{1}{\epsilon^2})$-competitive for any constant $\eps > 0$) for the total weighted flow-time objective. No non-trivial results were known for this setting, except for the most basic case of identical machines. Our result resolves  a major open problem in online scheduling theory.  Moreover, we also show that no job needs more than a logarithmic number of migrations.

We further extend our result and give a scalable algorithm for the objective of minimizing total weighted flow-time plus energy cost for the case of unrelated machines. In this problem, each machine can be sped up by a factor of $f^{-1}_i(P)$ when consuming power $P$, where $f_i$ is an \emph{arbitrary} strictly convex power function.  In particular, we get an $O(\gamma^2)$-competitive algorithm when all power functions are of form $s^{\gamma}$. These are the {\em first} non-trivial non-clairvoyant results in any setting with heterogeneous machines. 

The key algorithmic idea is to let jobs migrate selfishly  until they converge to an equilibrium. Towards this end, we define a  game where each job's utility which is closely tied to the instantaneous increase in the objective the job is responsible for, and each machine declares a policy that assigns priorities to jobs based on when they migrate to it, and the execution speeds. This has a spirit similar to coordination mechanisms that attempt to achieve near optimum welfare in the presence of selfish agents (jobs).  To the best our knowledge, this is the first work that demonstrates the usefulness of ideas from coordination mechanisms and Nash equilibria for designing and analyzing online algorithms. 
\end{abstract}

\newpage
\setcounter{page}{1}

\section{Introduction}
	\label{sec:intro}
Many computer architects believe that architectures consisting of heterogeneous processors will be the dominant architectural design in the future: Simulation studies indicate that, for a given area and power budget,  heterogeneous multiprocessors can offer an order of magnitude better performance for typical workloads~\cite{Bower08, KumarTJ06, Merritt10, Koufaty10}. Looking at the consequences of Moore's Law even further in the future, some computer architectures are projecting that we will transition from the current era of multiprocessor scaling to an era of ``dark silicon'', in which switches become so dense that it is not economically feasible to cool the chip if all switches are simultaneously powered.~\cite{EsmaeilzadehBASB11}.
One possible architecture in the dark silicon era would be many specialized processors, each designed for a particular type of job. The processors that are on any point of time should be those that are best suited for the current tasks.

It is recognized by the computer systems community~\cite{Bower08} and
the algorithms community that scheduling these future
heterogeneous multiprocessor architectures is a major challenge.
It is known that some of the standard scheduling algorithms for single processors and homogeneous processors
can perform quite badly on heterogeneous processors~\cite{GuptaIKMP12}. 
A scalable algorithm (defined below) is known if somehow the scheduler was clairvoyant (able to know
the size of a job when it arrives)~\cite{ChaddaGKM}; however, this knowledge is generally not
available in general purpose computing settings. 
A scalable algorithm is also known if all jobs were of equal importance~\cite{ImKM14};
however, the whole raison d'\^etre for heterogeneous architectures is that there is
generally heterogeneity among the jobs, most notably in their \emph{importance/priorities}.

\medskip
Therefore, a major open question in the area of online scheduling, both in theory and practice, is the design of scalable online algorithms for scheduling heterogeneous processors with arbitrary power functions, when job sizes are not known in advance (non-clairvoyant setting), and jobs have different priorities (weighted setting). The typical objective that has been studied in this context is minimizing weighted delay, or weighted flow-time. This problem generalizes most previous work~\cite{AnandGK12,ChaddaGKM,Devanur014,ImKM14,GuptaIKMP12,ImKM14} in online single and multiple machine scheduling considered recently by the algorithms community. As we discuss below, the algorithmic techniques developed for these problems (clairvoyant setting, unweighted setting, etc) do not extend in any natural way to the most general setting with weights, non-clairvoyance, and energy, leading to a fundamental gap in our understanding of classical scheduling algorithms. In particular, we ask: 
\begin{quote}
Do different variants of multiple machine scheduling considered in literature require different algorithmic techniques and analyses, or is there one unifying technique for them all?  %in the non-clairvoyant setting?
\end{quote}

In this paper, we close this gap, and obtain {\em one unifying algorithmic technique for them all},  achieving the {\em first} scalable non-clairvoyant algorithms for scheduling jobs of varying importance on heterogeneous machines, even with arbitrary power functions. % first obtain an algorithm for the case that processors can not be shutdown, in which case the objective is to optimize performance, as energy usage is fixed.  We then obtain an algorithm for the case that processors can be shutdown when  not in use to save energy, in which case the objective is to optimize a specified energy/performance tradeoff.  
The interesting aspect of our work, as we discuss in Section~\ref{sec:tech} below, is that it provides an algorithmically simple and conceptually novel framework for multiple machine scheduling as a coordination game (see \cite{CoordinationMechanisms,SB}), where jobs have (virtual) utility functions for machines based on delays they contribute to, and machines announce scheduling policies, and treat migration of jobs into them as new arrivals. In hindsight, we believe this provides the correct, unifying way of viewing all recent algorithmic results~\cite{AnandGK12,Devanur014,ImKM14,GuptaIKMP12} in related and simpler models.

\subsection{Our Results}
We adopt the competitive analysis framework in online algorithms. %When any online algorithm has a large competitive ratio, so that the competitive ratio gives little insight into the quality of algorithms, a popular relaxation is speed augmentation~\cite{kirk} where the online scheduler is given extra speed and compared against a slightly weaker optimal scheduler without the extra speed.  
We say that an online schedule is $s$-speed $c$-competitive  if it is given $s$ times faster machines and is $c$-competitive when compared to the optimal scheduler with no speed augmentation. The goal is to design a $O(1)$-competitive algorithm with the smallest extra speed. In particular, a {\em scalable algorithm}, which is $(1+\eps)$-speed $O(1)$-competitive for any fixed $\eps >0$, is considered to be essentially the best result one can hope for in the competitive analysis framework for machine scheduling~\cite{kirk}. (It is known that without resource augmentation no online algorithm can have bounded competitive ratio for our problem \cite{GargK07, ChaddaGKM}.)

\paragraph{Unrelated Machine Scheduling.}
We first consider the general \emph{unrelated} machine model where each job $j$ can be processed at rate $\ell_{ij} \geq 0$ on each machine $i$; if $\ell_{ij}=  0$, then job $j$ cannot be processed on machine $i$. In a feasible schedule, each job can be scheduled by at most one machine at any point in time. Preemption is allowed without incurring any cost, and so is migration to a different machine.  A job $j$'s flow-time $F_j := C_j  - r_j$ measures the length of time between when the job arrives at time $r_j$ and when the job is completed at $C_j$. When each job $j$ has weight $w_j$, the total weighted flow-time is $\sum_j w_j F_j$; the unweighted case is where all jobs weights are one. The online scheduler does not know the job size $p_j$ until the job completes. For a formal problem statement, please refer to Section~\ref{sec:unrelated}. 

Our first result is the following. For comparison, we note that no constant speed, constant competitive result was known for the weighted case, except when all machines are identical.

\begin{theorem} 
	\label{thm:flow-no-scalable} [Section~\ref{sec:unrelated}.]
	For any $\eps >0$, there is a $(1+\eps)$-speed $O(1/ \eps^2)$-competitive non-clairvoyant algorithm for the problem of minimizing the total weighted flow-time on unrelated machines. 
	Furthermore, each job migrates at most  $O((\log W + \log n)/ \eps)$ times, where $W$ denotes the ratio of the maximum job weight to the minimum.
\end{theorem}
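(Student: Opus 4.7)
My plan is to design a per-job, per-machine virtual priority structure that lets the algorithm \anew{} be analyzed via dual fitting against a suitable LP relaxation of weighted flow-time on unrelated machines. Concretely, each machine $i$ will commit in advance to a \wlaps{}-style local policy: at every instant it maintains the jobs currently assigned to it as a stack ordered by the (local) arrival time on $i$, where every migration counts as a new arrival; processing is shared among a latest-arrived $\epsilon$-weight prefix of the stack in proportion to job weights. For each job $j$ on machine $i$ I will define a virtual cost $v_{ij}(t)$ that captures, up to $(1/\epsilon)$ factors, how much $j$'s presence at its current stack position delays other jobs plus how much other jobs delay $j$. A job $j$ migrates from $i$ to $i'$ whenever its virtual cost on $i'$, if inserted at the top of the $i'$-stack, would be smaller than $v_{ij}(t)$ by a $(1+\epsilon)$ factor. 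The equilibrium/no-migration condition is exactly the local inequality I will need later to certify dual feasibility.

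The core of the proof is an amortized / dual-fitting argument. First I will write the natural time-indexed covering LP whose variables are the fractional service $x_{ij}(t)$ job $j$ receives from machine $i$ at time $t$ (weighted by $w_j/p_j$), together with constraints requiring each job to be fully served and each machine to use at most one unit of speed at each time. Taking the Lagrangian dual produces multipliers $\alpha_j$ (``total contribution of $j$'') and $\beta_i(t)$ (``instantaneous load on $i$''). I will set $\alpha_j$ proportional to the virtual cost $v_{ij}(t)$ at $j$'s current machine (summed appropriately over $j$'s lifetime), and $\beta_i(t)$ proportional to the total weight currently in the $\epsilon$-prefix of machine $i$'s stack. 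The equilibrium condition forces $\alpha_j/\ell_{ij} \le \alpha_j/\ell_{i^*j}$ up to $(1+\epsilon)$ factors where $i^*$ is $j$'s actual machine; combined with the \wlaps{} guarantee on each machine (which is where the $1/\epsilon^2$ loss enters: one factor from the $\epsilon$-prefix speed augmentation, one from the weight-proportional sharing), this yields approximate dual feasibility under $(1+\epsilon)$ resource augmentation. Comparing $\sum_j \alpha_j$ to the algorithm's actual weighted flow-time will give the $O(1/\epsilon^2)$ competitive ratio via weak duality.

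For the migration bound, I will argue that every migration of job $j$ strictly decreases its virtual cost $v_{ij}(t)$ by a multiplicative $(1+\epsilon)$ factor, by definition of the migration rule. The quantity $v_{ij}(t)$ is always sandwiched between a minimum value of order $w_j / (\max_i \ell_{ij})$ (from $j$'s own weighted processing demand) and a maximum of order $n W / \min\{\ell_{ij} : \ell_{ij}>0\}$ (from at most $n$ alive jobs, each weighing at most $W$ times $w_j$). Taking logs, job $j$ can migrate at most $O((\log n + \log W)/\epsilon)$ times, matching the claim. I will need to justify that weights of other alive jobs relative to $j$ can be bounded by $W$ (by normalizing) and that the processing rates do not introduce extra scale, which is handled by a standard rescaling of time and weights.

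The step I expect to be the main obstacle is constructing the virtual cost $v_{ij}(t)$ so that three things simultaneously hold: (i) it is computable non-clairvoyantly (depends only on job weights, remaining alive jobs, and machine speeds, never on unknown $p_j$), (ii) it is exactly the quantity whose ``no profitable deviation'' property yields feasibility of the dual $\alpha_j$, and (iii) a \wlaps{}-style local analysis converts the equilibrium inequality into a per-machine load bound on $\beta_i(t)$. Earlier dual-fitting approaches for unrelated-machine flow-time all failed this simultaneity precisely because they fixed the job-to-machine assignment once and for all, while \wlaps{}-only analyses failed because on a single machine they cannot anticipate that a job could be processed much faster elsewhere. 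Getting the stack-based, migration-as-re-arrival formulation to thread this needle, while keeping the loss to $O(1/\epsilon^2)$, is where the real work lies; once the right $v_{ij}(t)$ is in hand, the LP-duality bookkeeping and the logarithmic migration count follow relatively routinely.
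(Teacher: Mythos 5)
Your sketch is structurally faithful to the paper's strategy: virtual queues where a migration is treated as a fresh arrival at the tail, a virtual utility/cost driving selfish migration to an instantaneous Nash equilibrium, dual fitting with $\alpha_j$ set proportional to the cumulative instantaneous delay attributed to $j$ in its virtual queue and $\beta_{it}$ proportional to the instantaneous weighted load, and a logarithmic migration count coming from the monotonicity of the virtual utility. Those are exactly the organizing ideas of {\sc SelfishMigrate}. However, there are two substantive gaps relative to the paper's proof.

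First, the single-machine policy. You commit to \wlaps\ --- sharing the machine among the latest-arrived $\epsilon$-weight prefix of the stack --- whereas the paper deliberately does \emph{not} use \wlaps. It uses \slaps$(k)$, a smooth polynomial-weighting rule in which \emph{every} job $j$ on machine $i$ receives strictly positive rate $\nu_j(t) = \eta\bigl((\W_j(t)+w_j)^{k+1}-\W_j(t)^{k+1}\bigr)/W(i,t)^{k+1}$, with rates increasing smoothly toward the tail. The whole analysis then rests on two ``Delay Lemmas'' whose proofs are algebraic identities about this polynomial rate formula: the First Delay Lemma uses $W(i,t)\ge \W_j(t)+w_j$ to collapse the rate of $j$ into a telescoping bound giving $\Delta^1_j(t^*)\le(k+2)w_j(t^*-r_j)$, and the Second Delay Lemma uses $(\W_j+w_j)^{k+1}-\W_j^{k+1}\ge (k+1)\W_j^k w_j$ to get $\delta_j(t)\le\frac{k+2}{\eta(k+1)}\,\nu_j(t)(\W_j(t)+w_j)$ \emph{pointwise}, which is what lets one pull out $1/\phi(j,t^*)$ via the monotonicity lemma. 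The paper explicitly states that the smooth policy --- unlike \wlaps\ --- is ``critical'' because it lets one approximate the instantaneous delay a job induces on jobs ahead of it in its virtual queue without knowing sizes. Your proposal does not engage with this: you defer the whole definition of $v_{ij}(t)$ to ``the main obstacle,'' but that is exactly where the specific choice $\phi(j,t)=\ell_{ij}/(\W_j(t)+w_j)$ and the smooth \slaps$(k)$ rate formula are needed together. Without the polynomial rate formula there is no clean pointwise bound of $\delta_j(t)$ by $\nu_j(t)/\phi(j,t)$, and the Second Delay Lemma --- and hence dual feasibility --- does not go through by the paper's route.

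Second, the migration bound. You sandwich $v_{ij}(t)$ between a minimum of order $w_j/\max_i\ell_{ij}$ and a maximum of order $nW/\min\{\ell_{ij}:\ell_{ij}>0\}$, which would give a range depending on $\max_i\ell_{ij}/\min_i\ell_{ij}$ --- an unbounded quantity not controlled by $n$ or $W$. The paper avoids this: because the job greedily chooses the machine maximizing $\phi(j,r_j)=\max_i\ell_{ij}/(W(i,r_j)+w_j)$ at arrival, the \emph{initial} utility already lies within an $nW+1$ factor of the maximum possible utility $\max_i\ell_{ij}/w_j$; since each migration requires a multiplicative $(1+\epsilon)$ improvement of a monotone quantity, the count is $O((\log n+\log W)/\epsilon)$, with no dependence on the spread of the $\ell_{ij}$'s. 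Your argument, as written, would yield a weaker bound.
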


 It is well known~\cite{GuptaIKMP12} that in the non-clairvoyant setting, jobs need to be migrated to obtain $O(1)$ competitive ratio. Our algorithm migrates each job  relatively small number of times. Reducing the number of migrations is not only theoretically interesting but also highly desirable in practice~\cite{ChanLLZ13, GuptaKS14}. 
 %We note that the best known clairvoyant counterpart result is an $(1+\eps)$-speed $O(1 / \eps)$-competitive algorithm~\cite{AnandGK12, Devanur014}. Our algorithm uses a variant of weighted round robin scheduling on each machine -- the variant we use is a smooth extension of the algorithm Latest Arrival Processor Sharing (\laps or \wlaps) \cite{EdmondsP12, EdmondsIM11}. The best known competitive ratio for $\laps$ is $(1 / \eps^2)$ with $(1+\eps)$-speed. This shows our analysis is tight unless one gives a better analysis of $\laps$. 

\paragraph{Power Functions.}
We further extend our result to unrelated machines with arbitrary power functions.  Each machine $i$ is associated with an arbitrary power function $f_i: [0, \infty) \rightarrow [0, \infty)$ which is strictly convex and has $f_i(0) = 0$. When machine $i$ uses power $P >0$, it runs $f_i^{-1}(P)$ times faster than its original speed  --  it processes job $j$ at a rate of $\ell_{ij} f_i^{-1}(P)$. This model is widely studied~\cite{Albers10,  NH2009, GuptaKP10green, ChaddaGKM, GuptaIKMP12, Devanur014} and we consider the standard objective of minimizing the total weighted flow-time plus the total energy consumption~\cite{AlbersF07}. 

%We give the first non-clairvoyant algorithm for the average weighted flow-time objective that is scalable for unrelated machines.  The best corresponding clairvoyant result is a $(1+\eps)$-speed $O(1 / \eps)$-competitive algorithm~\cite{Devanur014}. 

\begin{theorem} 
	\label{thm:energy-scalable} [Section~\ref{sec:energy}.]
	For any $\eps >0$, there is a $(1+\eps)$-speed $O(1/ \eps^2)$-competitive non-clairvoyant algorithm for the problem of minimizing the total weighted flow-time plus total energy consumption on unrelated machines. This result holds even when each machine $i$ has an arbitrary strictly-convex power function $f_i: [0, \infty) \rightarrow [0, \infty)$ with $f_i(0) = 0$.   
\end{theorem}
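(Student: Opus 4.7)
The plan is to reduce the energy-aware problem to the fixed-speed weighted flow-time problem of Theorem~\ref{thm:flow-no-scalable} by layering a dynamic speed-scaling rule on top of the {\sc SelfishMigrate} framework. Specifically, at each moment $t$, each machine $i$ will run at a speed $s_i(t)$ whose instantaneous power $f_i(s_i(t))$ is set equal, up to a factor depending on $\eps$, to the total weight of the jobs currently resident on machine $i$. Integrating over time, this choice forces the algorithm's total energy consumption to be bounded by its own total weighted flow-time, so it will suffice to bound the algorithm's weighted flow-time against \opt's combined flow-plus-energy objective.

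To carry this out, I would extend the potential function used in Theorem~\ref{thm:flow-no-scalable} by adding, for each machine $i$, a term that accounts for the energy the algorithm is paying on $i$, and re-derive the instantaneous utility that each job perceives on each machine so that it reflects both the job's contribution to delay and its contribution to the machine's induced speed (and hence its share of energy). Jobs still migrate selfishly toward lower utility, and the analysis of Theorem~\ref{thm:flow-no-scalable} is replayed with this augmented potential: the potential drop at any moment should dominate the algorithm's instantaneous flow plus energy rate minus $(1+\eps)$ times \opt's flow rate. The extra energy incurred by the algorithm is absorbed into the speed-scaling rule above, while the term coming from \opt's (possibly quite different) speed choices is absorbed via the Fenchel/Young inequality $a \cdot b \le f_i(a) + f_i^{*}(b)$, which trades an unknown product of \opt's speed and a dual multiplier for \opt's energy plus a term the algorithm already controls.

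The main obstacle is the coupling introduced by speed scaling: because $s_i(t)$ depends on the set of jobs currently on machine $i$, a single job's perceived utility on machine $i$ depends on all other jobs that would co-reside with it after migration, so best-response dynamics become a fixed-point problem rather than a straightforward greedy selection as in Theorem~\ref{thm:flow-no-scalable}. I would handle this using the strict convexity of $f_i$, which guarantees that the best-response map is well-behaved and that a unique equilibrium exists; at equilibrium, the KKT-style inequalities used to bound the potential drop in Theorem~\ref{thm:flow-no-scalable} continue to hold, with the energy contribution absorbed into an effective per-unit-weight priority on each machine. A secondary obstacle is the arbitrariness of the $f_i$ across machines, which I would address uniformly through the convex conjugate $f_i^{*}$ rather than any parametric form such as $s^{\gamma}$; this is precisely what lets the theorem drop the assumption of polynomial power functions while still producing a $(1+\eps)$-speed $O(1/\eps^2)$-competitive algorithm.
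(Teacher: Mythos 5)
Your algorithmic ingredients match the paper closely: the speed-scaling rule $f_i(S(i,t)) = W(i,t)$ that makes energy cost equal weighted flow-time, the modified virtual utility that folds the induced speed into each job's perceived priority, selfish migration to a Nash equilibrium, and the use of the Legendre--Fenchel conjugate $f_i^{*}$ to handle arbitrary strictly convex power functions without any parametric assumption. These are exactly the paper's design choices, and your intuition that the problem then reduces to bounding the algorithm's weighted flow-time against \opt's combined objective is correct.

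However, your proposed analysis goes through an amortized local-competitiveness argument with a potential function (``the potential drop at any moment should dominate the algorithm's instantaneous flow plus energy rate minus $(1+\eps)$ times \opt's flow rate''), and you present this as an extension of the ``potential function used in Theorem~\ref{thm:flow-no-scalable}.'' That is a mischaracterization of the base case: Theorem~\ref{thm:flow-no-scalable} is proved by dual fitting against \lpp, not by a potential function, and the paper explicitly contrasts these two frameworks and chooses dual fitting precisely because potential functions for weighted flow-time on heterogeneous machines with migration are hard to control. The paper's Section~\ref{sec:energy} correspondingly fits duals to the convex program \cpp of \cite{AnandGK12, Devanur014}, setting $\alpha_j = \frac{1}{k+2}\Delta_j$ (the cumulative weighted delay $j$ induces on jobs ahead of it in its virtual queue) and $\beta_{it} = \frac{1}{k}\cdot\frac{W(i,t)}{g_i(W(i,t))}$, and then re-proves the two Delay Lemmas with $g_i$ in place of $\eta$; the Fenchel conjugate enters only through Lemma~\ref{convex:property} to control $f_i^{*}(\beta_{it}/(1+3\eps))$ in the dual objective and through the third term of \cpp in the feasibility check. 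Your potential-function route is a genuinely different analysis, and you have not supplied the potential or the computation that would make the drop-dominance inequality hold across migrations; absent that, the hardest part of the proof is still missing.

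A smaller point: you frame the equilibrium existence as a ``fixed-point problem'' resolved by strict convexity and uniqueness. The paper does not need uniqueness. It shows (Lemma~\ref{lem:powerincrease}) that sequential best-response dynamics monotonically increases every job's virtual utility, which gives convergence to \emph{some} Nash equilibrium; the mechanism is the elementary fact that $g(w)/w$ is decreasing for increasing concave $g$ with $g(0)=0$ (Proposition~\ref{cpl1}), so a departure from a machine can only raise the virtual utilities of the remaining jobs there. Strict convexity of $f_i$ is used only to make $g_i = f_i^{-1}$ well-defined and strictly concave; it is not the engine driving equilibrium existence.
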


The same guarantee on the number of migrations by a job stated in Theorem~\ref{thm:flow-no-scalable} holds for this setting.  The theorem implies a $O(\gamma^2)$-competitive algorithm (without resource augmentation) when each machine $i$ has a power function $f(s) = s^{\gamma}$ for some $\gamma >1$, perhaps most important power functions in practice. Our result also implies a scalable algorithm in the model where at each time instant a processor $i$ can either run at speed $s_i$ consuming a power $P_i$, or be shutdown and consume no energy. 

We note that no $O(1)$-speed $O(1)$-competitive non-clairvoyant algorithm was known prior to our work even in the related machine setting for {\em any} nontrivial classes of power functions. 

\iffalse
The next theorem immediately follows by simulating $1+ \eps = 1+ 1 / \gamma$ speed augmentation by using extra power. The same guarantee on the frequency of job migration carries to this special case as well. We note that there is a $O(\gamma/ \log \gamma)$-competitive algorithm when the scheduler is clairvoyant \cite{Thang13, Devanur014}. 

\begin{theorem} 
	\label{thm:energy-poly-scalable} [Section~\ref{sec:energy}.]
	Consider the case where each machine $i$ has a power function $f_i(s) = s^{\gamma_i}$ for some $\gamma_i >1$. Let $\gamma := \max_i \gamma_i$. Then, there is a a $O(\gamma^2)$-competitive non-clairvoyant algorithm for minimizing the total weighted flow-time plus total energy consumption on unrelated machines. 
\end{theorem}
\fi

%We note that all our algorithms are distributed and only involves each job knowing the total weight of jobs assigned to a particular machine.

\subsection{Technical Contributions: Selfish Migration and Nash Equilibrium} 
\label{sec:tech}
Our main technical contribution is a new conceptually simple game-theoretic framework for multiple machine scheduling that unifies, simplifies and generalizes previous work, both in terms of algorithm design as well as analysis using dual fitting. Before presenting this framework, we present some difficulties an online scheduler has to overcome in the non-clairvoyant settings we consider.

An online scheduler for multiple machines consists of two scheduling components: A single-machine scheduling policy on each machine, and the global machine assignment rule which assigns jobs to machines. In the context of clairvoyant scheduling~\cite{ChaddaGKM,AnandGK12}, the authors in~\cite{AnandGK12} show via a dual fitting analysis that the following algorithm is scalable: Each machine runs a scalable single-machine scheduling policy such as Highest Density First (HDF);  this is coupled with a simple greedy dispatch rule that assigns arriving jobs to the machine on which they cause the least increase in flow-time to previously assigned jobs. This simple yet elegant algorithm has been very influential. In particular, the greedy dispatch rule has become standard, and been used in various scheduling settings~\cite{ImM11,GuptaKP10,AnandGK12,Thang13,Devanur014}.  
The analysis proceeds by setting dual variables corresponding to a job to the marginal increase in total delay due to the arrival of this job, and showing that this setting is not only feasible but also extracts a constant fraction of the weighted flow-time in the objective. {\em The immediate-dispatch greedy rule is necessary for analysis in all the aforementioned work}, since they require the algorithm to measure each job's effect on the system at the moment it arrives. 
% xxx
%Such an approach, however, crucially needs an estimate of the impact an arriving job has on the delay of previous jobs, since these define the setting of dual variables.  This also leads to fixing the assignment of jobs to machines on arrival (no migration). 

In a non-clairvoyant context, there are main two hurdles that arise. First, to use the greedy dispatch rule, it is crucial to measure how much a new job affects the overall delay, that is, how much the job increases the objective. To measure the increase, the scheduler must know the job size, which is not allowed in the non-clairvoyant setting. Secondly, as mentioned before, jobs must migrate to get a $O(1)$-competitive algorithm even with any $O(1)$-speed, and this makes it more difficult to measure how much each job is responsible for the overall delay. This difficulty appears in the two analysis tools for online scheduling, potential function \cite{ImMP11} and dual fitting method \cite{AnandGK12,GuptaKP12,Devanur014}. Due to these difficulties, there have been very few results for non-clairvoyant scheduling on heterogeneous machines~\cite{GuptaKP10green,GuptaIKMP12,ImKM14}. Further, there has been \emph{no} work in \emph{any} heterogeneous machines setting for the weighted flow-time objective.

\subsubsection{{\sc SelfishMigrate} Framework}
We demonstrate a simple framework {\sc SelfishMigrate} that addresses the above two issues in one shot. Our algorithm can be best viewed in a  game theoretic setting where jobs are selfish agents, and machines declare their scheduling policies in advance. 

\paragraph{Machine Behavior.} Each machine maintains a {\em virtual queue} on the current set of jobs assigned to it; newly arriving jobs are appended to the tail of this queue.  In a significant departure from previous work~\cite{AnandGK12,Devanur014,GuptaKP10green,GuptaIKMP12,ImKM14}, each machine treats a migration of a job to it as an arrival, and a migration out of it as a departure. This means a job migrating to a machine is placed at the tail of the virtual queue. 

Each machine runs a scheduling policy that is a modification of weighted round robin (WRR) that smoothly assigns larger speed to jobs in the tail of the queue, taking weights into account. This is a smooth extension of the algorithm Latest Arrival Processor Sharing (\laps or \wlaps) \cite{EdmondsP12, EdmondsIM11}. We note that the entire analysis also goes through with WRR, albeit with $(2+\eps)$-speed augmentation. The nice aspect of our smooth policies (unlike \wlaps) is that we can approximate the instantaneous delay introduced by this job to jobs ahead of it in its virtual queue, {\em even without knowing job sizes}. This will be critical for our analysis.\footnote{The best known competitive ratio for $\laps$ is $O(1 / \eps^2)$ with $(1+\eps)$-speed, which shows our overall analysis is tight unless one gives a better analysis of $\laps$. }

\paragraph{Job Behavior.} Each job $j$ has a virtual utility function, which roughly corresponds to the inverse of the instantaneous weighted delay introduced by $j$ to jobs ahead of it in its virtual queue, and their contribution to $j$'s weighted delay.  Using these virtual utilities, jobs perform sequential best response (SBR) dynamics, migrating to machines (and get placed in the tail of their virtual queue) if doing so leads to larger virtual utility. Therefore, at each time instant, the job migration achieves a Nash equilibrium of the SBR dynamics on the virtual utilities. We show that our definition of the virtual utilities implies they never decrease due to migrations, arrivals, or departures, so that at any time instant the Nash equilibrium exists and is computable. (We note that at each time step, we simulate SBR dynamics and migrate each job directly to the machine that is predicted by the Nash equilibrium.)

When a job migrates to a machine, the virtual utility starts off being the same as the real speed the job receives. As time goes by, the virtual queue ahead of this job shrinks, and that behind it increases. This lowers the real speed the job receives, but its virtual utility, which measures the inverse of the impact to jobs ahead in the queue and vice versa, does not decrease. Our key contribution is to define the coordination game on the virtual utilities, rather than on the actual speed improvement jobs receive on migration. A game on the latter quantities (utility equals actual speed) need not even admit to a Nash equilibrium.

\medskip
Given the above framework, our analysis proceeds by setting the dual variable for a job to the increase in overall weighted delay it causes on jobs ahead of it in its virtual queue.  Our key insight is to show that Nash dynamics on virtual utilities directly corresponds to our setting of dual variables being feasible for the dual constraints, implying the desired competitive ratio.  This overall approach is quite general, even extending to energy constraints, and requires two key properties from the virtual utility:

\begin{itemize}
\item The virtual utility should correspond roughly to the inverse of the instantaneous delay induced by a job on jobs ahead of it in its virtual queue.  
\item SBR dynamics should monotonically improve virtual utility, leading to a Nash equilibrium that corresponds exactly to satisfying the dual constraints.
\end{itemize}

Our main (and perhaps quite surprising) contribution is to show the existence of such a virtual utility function for WRR and its scalable modifications, when coupled with the right notion of virtual queues. In hindsight, we believe this framework is the right way to generalize the greedy dispatch rules and dual fitting analysis from previous works~\cite{AnandGK12,ImKM14}, and we hope it finds more applications in complex scheduling settings.

\subsubsection{Comparison with Previous Techniques}
As mentioned above, the algorithmic idea based on coordination games is very different from the previous greedy dispatch rules~\cite{ChaddaGKM,AnandGK12,Devanur014} for clairvoyant scheduling, and also from previous work on non-clairvoyant scheduling~\cite{GuptaKP10green,GuptaIKMP12,ImKM14}. We contrast our algorithm with these, highlighting the necessity of new techniques. 

It is instructive to compare this framework with the scalable non-clairvoyant algorithm for {\em unweighted} flow-time~\cite{ImKM14}. This algorithm has the seeds of several ideas we develop here -- it introduces virtual queues, a smooth variant of \laps~\cite{EdmondsP12, EdmondsIM11} for single-machine scheduling, as well as migration based on the delay a job contributes to jobs ahead in its queue. However, this algorithm always assigns priorities to jobs in order of original arrival times, and migrates jobs to machines preserving this ordering. In essence, this algorithm mimics the clairvoyant algorithms~\cite{ChaddaGKM,AnandGK12} that account for delay a job contributes to those that arrived ahead in time.  This idea of arrival time ordering is specific to unweighted jobs, and does not extend to the weighted case or to energy constraints. In contrast, we let each machine treat migrations as new arrivals, leading us to view job migration through a game-theoretic lens. This leads to a more natural framework via instantaneous Nash equilibria, with a simple dual fitting analysis. The simplicity makes the framework extend easily to energy constraints.   The resulting accounting using the delay a job $j$ induces to those ahead of it in its virtual queue is novel -- in contrast to previous work~\cite{AnandGK12,Devanur014,GuptaKP10green,GuptaIKMP12,ImKM14}, the virtual queue changes with time and could possibly include jobs whose original arrival time $r_{j'}$ is later than that of job $j$. %Despite this apparent complexity, our overall framework, in hindsight provides the right way to think about and generalize previous work.

To further illustrate the technical difficulty of the weighted case, let us consider Round-Robin (\rr) and its variants. The work of~\cite{GuptaKP10green} consider unweighted jobs, and gives $(2+\eps)$-speed $O(1)$-competitive algorithm for the  total (unweighted) flow-time on related machines. The work of~\cite{GuptaIKMP12} improves  this to a scalable algorithm, but for simpler illustration, we focus on \rr. In the \rr used in~\cite{GuptaKP10green}, each of $n$ fastest machines is distributed to all $n$ jobs uniformly.  It is not difficult to see that this fractional schedule can be decomposed into a convex combination of feasible actual schedules. Hence, \rr allows us to view multiple machines as a more powerful single machine, and this is the crux of the analysis in \cite{GuptaKP10green}. In contrast,  the work of~\cite{GuptaIKMP12} argues that the weighted case is not as simple: in fact, they show that natural extensions of weighted round robin for related machines fail to be competitive.

%The work of~\cite{ImKM14} presents a scalable algorithm for non-clairvoyant scheduling of heterogeneous machines, but with unweighted jobs. This algorithm always assigns priorities to jobs in order of arrival times on the machines, and sequentially migrates jobs to machines using these queues. This idea of arrival time ordering is specific to unweighted jobs, and does not extend to the weighted case or to energy constraints. In contrast, we show that viewing migration from a game-theoretic lens with a virtual utility function, coupled with treating migration as the same as a new arrival from the perspective of a machine, leads to a more natural and general framework via instantaneous Nash equilibria, which also extends to energy constraints. 

\subsection{Other Related Work} 
For a survey on online scheduling, please see~\cite{PST}. As alluded to above, for weighted flow-time on unrelated machines, the best clairvoyant result is a $(1+\eps)$-speed $O(1 / \eps)$-competitive algorithm~\cite{AnandGK12, Devanur014}. For the version with power functions, the best corresponding clairvoyant result is a $(1+\eps)$-speed $O(1 / \eps)$-competitive algorithm~\cite{Devanur014}.   In the most basic setting of multiple machines where machines are all identical, the work of~\cite{ChekuriGKK04} gives the first analysis of scalable algorithms for weighted flow-time. Intuitively, machine assignment rule should have a spirit of load balancing. Indeed,  the work of~\cite{ChekuriGKK04}  shows two machine assignment rules can be combined with various single machine scheduling policies to yield a scalable algorithm. One is random assignment rule, and the other is based on volume of each job class. 

For the problem of non-clairvoyantly scheduling a single machine, the \wlaps\ (Weighted Latest Arrival Processor Sharing) algorithm~\cite{EdmondsP12, BansalKN10, EdmondsIM11} is scalable for the total weighted flow even when jobs have arbitrary speedup curves. Other non-clairvoyant scalalble algorithms for the unweighted case include Shorted Elapsed Time First (\setf) and Multi-level Feedback~\cite{kirk}. The work of~\cite{BansalP04} extends \setf to its weighted version. While Shortest Remaining Procesing Time (\srpt) is optimal for the total flow time objective on a single machine with clairvoyance, even slightly more general settings (non-clairvoyant or weighted or multiple machine settings) do not admit a $O(1)$-competitive algorithm without resource augmentation~\cite{BecchettiL04,BansalC09,LeonardiR07}.

\section{Unrelated Machine Scheduling}
\label{sec:unrelated}
In this problem, there are $m$ unrelated machines. Job $j$ is processed at rate $\ell_{ij}  \in [0, \infty)$ on each machine $i$.  Each job has processing length $p_j$ and weight $w_{j}$. The online algorithm is allowed to preempt and migrate jobs at any time with no penalty. The important constraint is that at any instantaneous time, a job can be processed only on a single machine. Job $j$ is released at time $r_j$.  In the non-clairvoyant online scheduling model we consider, the scheduler knows the values of $\ell_{ij}$ and $w_{j}$ when the job arrives, but is not aware of the processing length $p_j$.\footnote{It is easy to show that if $\ell_{ij}$ values are not known, then no online algorithm can have a bounded competitive ratio even with any constant speed augmentation.} Without loss of generality we assume that weights $w_j$ are integers. 

\medskip
Fix some scheduling policy $\P$. At each time instant $t$, each active job $j$ with $r_j \le t$ is assigned to some machine $i$. Let $J_i(t)$ denote the set of jobs assigned to machine $i$. Machine $i$ splits its processing power among the jobs in $J_i(t)$. Let $\nu_j(t)$ denote the processing power assigned to job $j \in J_i(t)$. We enforce that $\sum_{j \in J_i(t)} \nu_j(t) \le 1$ for all $i,t$. Then, $j \in J_i(t)$ executes at rate $q_j(t) = \ell_{ij} \nu_j(t)$. The completion time $C_j$ is defined as the earliest time $t_j$ such that
$$C_j = \mbox{argmin}_{t_j} \left(\int_{t = r_j}^{t_j} q_j(t) dt \ge p_j \right)$$ 
At this time, the job finishes processing and departs from the system. The objective is to find a scheduling policy that minimizes the sum of weighted flow-times $\sum_j w_j F_j$, where $F_j = C_j-r_j$ is the flow-time of job $j$.

\medskip
In the {\em speed augmentation} setting, we assume the online algorithm can process job $j$ at rate $(1+\epsilon) \ell_{ij}$ on machine $i$, where $\epsilon > 0$. We will compare the resulting flow-time against an offline optimum that knows $p_j$ and $r_j$ at time $0$, but is not allowed the extra speed. Our main result is a {\em scalable} algorithm that,  for any $\epsilon > 0$, is $O(1/\epsilon^2)$ competitive with speed augmentation of $(1+\epsilon)$. 

\iffalse We prove following theorem in this section.

\begin{theorem} 
	\label{thm:flow-no-scalable} [Section~\ref{sec:unrelated}.]
	For any $\eps >0$, there is a $O(1/ \eps^2)$-competitive non-clairvoyant algorithm with $(1+\eps)$-speed augmentation for the problem of minimizing the total weighted flow-time on unrelated machines.
\end{theorem}
\fi

\subsection{The {\sc SelfishMigrate} Algorithm}
Our algorithm can be best viewed as a coordination mechanism between the machines and the jobs. Each machine declares a {\em single machine policy} that it uses to prioritize and assign rates to arriving jobs. Given these policies, jobs migrate to machines that give them the most instantaneous utility (in a certain virtual sense). We will now define the single machine scheduling policy, and the utility function for jobs.

\subsubsection{Single Machine Policy: Weighted Ranked Processor Sharing, \slaps$(k)$}
This policy is parametrized by an integer $k$ (that we will later set to $1/\epsilon$) and $\eta > 1$ that captures the speed augmentation (and that we set later to $1+3\epsilon$). Fix some machine $i$ and time instant $t$. Recall that $J_i(t)$ denotes the set of jobs assigned to this machine at time $t$. Let $W(i,t)$ denote their total weight, {\em i.e.}, $W(i,t) = \sum_{j \in J_i(t)} w_j$.  The machine maintains a {\em virtual queue} on these jobs.

We now introduce some notation based on these virtual queues. Let $\sigma(j,t)$ denote the machine to which job $j$ is assigned at time $t$. Therefore, $i = \sigma(j,t)$ if and only if $j \in J_i(t)$. Let $\pi_j(t)$ denote the rank of $j$ in the virtual queue of $i = \sigma(j,t)$, where the head of the virtual queue has rank $1$ and the tail of the queue has rank $|J_i(t)|$.  Let $\S_j(t)$ denote the set of jobs ahead of job $j \in J_i(t)$ in the virtual queue of machine $i$. In other words 
$$\S_j(t) = \left\{ j' \;|\; \sigma(j',t) = \sigma(j,t) \mbox{ and } \pi_{j'}(t) < \pi_j(t) \right\}$$
Let $\W_j(t) = \sum_{j' \in \S_j(t)} w_{j'}$ denote the total weight of jobs ahead of job $j$ in its virtual queue.

\paragraph{Rate Assignment.} At time instant $t$, the total processing rate of the machine $i$ is divided among the jobs in $J_i(t)$ as follows. Job $j \in J_i(t)$ is assigned processing power $\nu_j(t)$ as follows:

\begin{equation}
	\label{eqn:jobshare}
	\nu_j(t)  :=  \eta \cdot \frac{(\W_{j}(t)+w_{j})^{k+1} - \W_j^{k+1} }{W(i,t)^{k+1}} 
\end{equation}
\noindent

The rate at which job $j \in J_i(t)$ is processed at time $t$ is therefore $\ell_{ij} \ \nu_j(t)$. Note that $\sum_{j \in J_i(t)} \nu_j(t) = \eta$ at all time instants $t$ and for all machines $i$. Note that if $k = 0$, this is exactly weighted round robin. As $k$ becomes larger, this gives higher rate to jobs in the tail of the queue, taking the weights $w_j$ into account. This ensures that small jobs arriving later do not contribute too much to the flow-time, hence reducing the speed augmentation.   One important property of \slaps(k) is that if a new job is added to the tail of the virtual queue, then all the old jobs are slowed down by the same factor. This is one of the important characteristics of weighted round robin which ensures that for a pair of jobs weighted delay induced by each other is exactly same. $\slaps(k)$ preserves this property to a factor of $O(k)$, and this will be crucial to our analysis. 

We note that using the natural setting of $k=0$ (weighted round robin) gives a competitive algorithm with speedup $(2+\epsilon)$, and this is tight even for a single machine. We use a larger value of $k$ to reduce the amount of speed augmentation needed. (We believe that $\slaps(k)$ gives a black-box reduction from any $(2+\epsilon)$-speed algorithm using \wrr into a scalable algorithm.)

\paragraph{Arrival Policy.} The behavior of the policy is the same when a job $j$ either arrives to the system and chooses machine $i$, or migrates from some other machine to machine $i$ at time $t$. In either case, the job $j$ is placed at the {\em tail} of the virtual queue. In other words, if $J_i(t^-)$ is the set of jobs just before the arrival of job $j$, then we set $\sigma(j,t) = i$ and $\pi_j(t) = |J_i(t^-)| + 1$. Therefore, the virtual queue sorts the jobs in order in which they arrive {\em onto this machine}. Since a job could also arrive due to a migration, this is {\em not the same} as ordering on the $r_j$ -- a job with smaller $r_j$ that migrates at a latter point onto machine $i$ will occupy a relatively later position in its virtual queue.

\paragraph{Departure Policy.} If job $j$ departs from machine $i$ either due to completion or due to migrating to a different machine, the job simply deletes itself from $i$'s virtual queue, keeping the ordering of other jobs  the same. In other words, for all jobs $j' \in J_i(t)$ with $\pi_{j'}(t) > \pi_j(t)$, the value $\pi_{j'}(t)$ decreases by 1.

\subsubsection{Virtual Utility of Jobs and Selfish Migration}
The virtual queues define a {\em virtual utility} of job as follows. Let $j \in J_i(t)$ at time $t$. Then its virtual utility is defined as:
$$ \phi(j,t) = \frac{\ell_{ij}}{\W_j(t) + w_j}$$

We interpret this utility as follows: The inverse of this quantity will be %(to within a constant factor) 
roughly in proportion to the marginal increase in instantaneous weighted delay that job $j$ induces on jobs $\S_j(t)$ that are ahead of it in its virtual queue, and their contribution to the weighted delay of $j$. We will establish this in the Delay Lemmas below. This marginal increase is exactly what we need in order to define dual variables in our proof, and in some sense, the virtual utility is defined keeping this in mind.

At every time instant, job $j \in J_i(t)$ behaves as follows: If it were to migrate to machine $d \neq i$, it would be placed at the tail of $d$'s queue and would obtain virtual utility $\frac{\ell_{dj}}{W(d,t) + w_j}$. If this quantity is larger than $\phi(j,t)$, then job $j$ migrates to machine $d$. This leads to the corresponding changes to the virtual queues of machine $i$ (job $j$ is deleted), machine $d$ (job $j$ is added to the tail), and the virtual utility $\phi(j,t)$ of job $j$ (which is set to $\frac{\ell_{dj}}{W(d,t) + w_j}$).  At every time instant $t$, this defines a game on the jobs, and starting with the configuration at the previous step, the jobs simulate sequential best response dynamics, where they sequentially migrate to improve virtual utility, till the system reaches a Nash equilibrium. In this configuration, each job is on a machine that is locally optimal for $\phi(j,t)$. 

Note that if a job departs from a machine, the virtual utilities of other jobs on that machine either stay the same or increase. Further, if a job migrates to a machine, it is placed on the tail of the virtual queue, so that the virtual utilities of other jobs on the machine remain the same. This shows that sequential best response dynamics guarantees that the virtual utilities of {\em all} jobs are monotonically non-decreasing with time, converging to a Nash equilibrium.  (Note that jobs don't actually need to execute best response dynamics since they can directly migrate to the machines corresponding to the resulting Nash equilibrium.)

When a new job arrives to the system, it simply chooses the machine $i$ which maximizes its virtual utility, $\frac{\ell_{ij}}{W(i,t) + w_j}$, where $W(i,t)$ is the weight of jobs assigned to $i$ just before the arrival of job $j$. This completes the description of the algorithm.

The following lemma is an easy consequence of the description of the algorithm.
\begin{lemma}
\label{lem:increase}
For all jobs $j$, $\phi(j,t)$ is non-decreasing over the interval $t \in [r_j, C_j]$.
\end{lemma}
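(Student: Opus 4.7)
The plan is to prove the lemma by a case analysis on the discrete events that can change $\phi(j,t)=\frac{\ell_{\sigma(j,t)j}}{\W_j(t)+w_j}$. Between such events, the quantity is constant (the numerator depends only on the machine assignment, and $\W_j(t)$ depends only on the identities and weights of jobs ahead of $j$ in its virtual queue, both of which change only at discrete events). So it suffices to check that at every event type, $\phi(j,t)$ does not decrease.

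First I would enumerate the possible events affecting $\phi(j,t)$ for a fixed job $j$ that is alive at time $t$: (i) another job $j'$ arrives to the system at time $t$ and is placed on some machine; (ii) another job $j'$ migrates between machines; (iii) another job $j'$ departs due to completion; (iv) job $j$ itself migrates. For events (i) and (ii) affecting $j$'s own machine $i=\sigma(j,t)$, the arrival/migration policy places $j'$ at the \emph{tail} of $i$'s virtual queue, hence $j' \notin \S_j(t)$, so $\W_j(t)$ is unchanged and $\phi(j,t)$ is unchanged. Events on a machine $i'\neq i$ leave $\W_j(t)$ untouched. For a departure event (iii) on machine $i$, by the departure policy the ordering of remaining jobs is preserved, so either $j'\notin\S_j(t)$ (no change) or $j'\in\S_j(t)$, in which case $\W_j(t)$ strictly decreases while the numerator $\ell_{ij}$ is unchanged, so $\phi(j,t)$ strictly increases.

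The only remaining case is (iv), a migration of $j$ itself. By the definition of the selfish migration rule in the \emph{SelfishMigrate} dynamics, job $j$ migrates from $i$ to some machine $d$ only if the resulting virtual utility $\frac{\ell_{dj}}{W(d,t^-)+w_j}$ strictly exceeds the current virtual utility $\phi(j,t^-)=\frac{\ell_{ij}}{\W_j(t^-)+w_j}$. After the migration, $j$ is placed at the tail of $d$'s virtual queue, so $\W_j(t^+)+w_j = W(d,t^-)+w_j$ (where $W(d,t^-)$ does not include $w_j$ yet), and the new virtual utility equals exactly this strictly larger value. Finally, sequential best response dynamics at time $t$ consist of a sequence of such migrations, each of which only increases the virtual utility of the moving job and (by the earlier cases applied at the source and destination machines) never decreases the virtual utility of any other job. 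Hence $\phi(\cdot,t)$ is monotone non-decreasing through the entire SBR cascade, and the Nash equilibrium reached at time $t$ has virtual utilities at least as large as at the previous time.

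There is no real obstacle here; the only subtlety worth making explicit is that we must separately argue that virtual utilities of non-moving jobs are not hurt when some other job performs its best response — and this follows directly from the arrival/departure policies of \slaps$(k)$, which place incoming jobs at the tail and preserve ordering under deletions. Combining the four cases, for every event in $[r_j,C_j]$ the virtual utility $\phi(j,t)$ weakly increases, proving the lemma.
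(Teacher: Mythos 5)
Your case analysis is correct and follows essentially the same reasoning the paper uses (the paper states the lemma as an ``easy consequence'' of the two observations that departures from a machine can only raise the utilities of the jobs remaining on it, and that arrivals/migrations into a machine are placed at the tail and so leave existing jobs' utilities unchanged, combined with the fact that a job only migrates when its own utility strictly improves). Your version is simply a more explicit write-up of that argument, including the useful remark that one must separately check that non-moving jobs are unaffected during the SBR cascade.
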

 
\subsection{Analysis of {\sc SelfishMigrate}}
We first write a linear programming relaxation of the problem \lpp described below which was first given by~\cite{AnandGK12,GargK07}. It has a variable $x_{ijt}$ for each machine $i \in [m]$, each job $j \in [n]$ and each unit time slot $t \geq r_{j}$. If the machine $i$ processes the job $j$ during the whole time slot $t$, then this variable is set to $1$. The first constraint says that every job has to be completely processed. The second constraint says that a machine cannot process more than one unit of jobs during any time slot. Note that the LP allows a job to be processed simultaneously across different machines.

\[ \mbox{Min} \ \ \ \sum_{j} \sum_{i} \sum_{t \geq r_{j}}  \left(\frac{ \ell_{ij}(t- r_j)}{p_{j}} + 1 \right) \cdot w_{j}\cdot x_{ijt}  \qquad \tag{\lpp} \label{primal2} \]
\[ \begin{array}{rcllr}
\displaystyle 	\sum_{i} \sum_{t \geq r_{j}} \frac{\ell_{ij} \cdot x_{ijt}}{p_{j}} &\geq& 1 \qquad &\forall j   \\ 
\displaystyle	\sum_{j \, : \, t \geq r_{j}} x_{ijt} &\leq& 1 &\forall i,    t      \\
\displaystyle	x_{ijt} &\geq& 0  &\forall i,  j,  t \, : \, t \geq r_{j} \qquad 
\end{array}  \]

It is easy to show that the above LP lower bounds the optimal flow-time of a feasible schedule within factor 2.  We use the dual fitting framework to analyze {\sc SelfishMigrate}. We write the dual of $\lpp$ as,

\[ \mbox{Max} \ \ \ \sum_{j} \alpha_j -   \sum_{i} \sum_t \beta_{it} \qquad \tag{\lpd} \label{Dual} \]
\[ \begin{array}{rcllr}
\displaystyle \;\;\;\;	\frac{\ell_{ij} \cdot \alpha_j}{p_{j}} - \beta_{it} &\leq& \displaystyle \frac{w_{j}\ell_{ij}(t- r_j)}{p_{j}} + w_{j} \qquad &\forall  i,j, t \, : \, t \geq r_{j}  \label{dual-constraint}\\ 
\displaystyle  \;\;\;\;	\alpha_j &\geq& 0  &\forall   j \\ 
\displaystyle  \;\;\;\;	\beta_{it} &\geq& 0  &\forall   i,t  
\end{array}  \]

We will show that there is a feasible solution to $\lpd$ that has objective $O(\epsilon^2)$ times the total weighted flow-time of {\sc SelfishMigrate}, provided we augment the speed by $\eta = (1 + 3 \epsilon)$. From now on, we will assume that each processor in {\sc SelfishMigrate} has $\eta$ extra speed when processing jobs. 

\subsubsection{Instantaneous Delay and Setting Dual Variables}
Recall that each machine runs \slaps$(k)$ with $k = 1/\epsilon$, and we assume without loss of generality that $1/\epsilon$ is an integer.  We define the {\em instantaneous weighted delay} induced by job $j$ on jobs ahead of $j$ in its virtual queue (the set $\S_j(t)$) as follows:
$$ \delta_j(t) = \frac{1}{\eta} \left(\sum_{j' \in \S_j(t)}  \left( w_{j'} \cdot \nu_{j}(t) + w_{j}\cdot\nu_{j'}(t) \right) + w_{j} \cdot \nu_{j}(t) \right) $$
This quantity sums the instantaneous weighted delay that $j' \in \S_j(t)$ induces on $j$ and vice versa, plus the delay seen by $j$ due to itself. Note that $\delta_j(t)$ is equal to $\frac{1}{\eta} \left((\W_j(t) + w_j)\cdot \nu_{j}(t) + w_j \cdot \sum_{j'\in \S_j(t)}\nu_{j'}(t) \right)$.

Define,
$$ \Delta_j = \int_{t=r_j}^{C_j} \delta_j(t) dt$$ 
as the cumulative weighted delay induced by $j$ on jobs ahead of it in its virtual queue and vice versa. Note that the set $\S_j(t)$ changes with $t$ and can include jobs that are released after job $j$.  It is an easy exercise to check that $\sum_j w_j F_j = \sum_j \Delta_j$. Our way of accounting for weighted delay is therefore a significant departure from previous work that either keeps $\S_j(t)$ the same for all $t$ (clairvoyant algorithms), or preserves orderings based on arrival time.  

\medskip
We now perform the dual fitting. We set the variables of the \ref{Dual} as follows. We set $\beta_{it}$ proportional to the total weight of jobs alive on machine $i$ at time $t$, {\em i.e.}, $\beta_{it} = \frac{1}{k+3} W(i,t)$. We set $\alpha_j = \frac{1}{k+2} \Delta_j$, {\em i.e.}, proportional to the cumulative weighted delay induced by $j$ on jobs ahead of it in its virtual queue.

We first bound the dual objective as follows (noting $k = 1/\epsilon$ and $\eta = 1+3 \epsilon$):
\begin{eqnarray}
\sum_{j} \alpha_j - \sum_{i,t}\beta_{it} &=&  \sum_{j} \frac{\Delta_j}{k+2} - \sum_{i,t} \frac{W(i,t)}{k+3} \nonumber 
=  \epsilon \left(\sum_{j} \frac{\Delta_j}{1+2\epsilon} - \sum_{i,t} \frac{W(i,t)}{1 + 3\epsilon} \right) \nonumber\\ 
&=& \epsilon \cdot \sum_j w_jF_j \cdot \left(\frac{1}{1+2\epsilon} - \frac{1}{1+3\epsilon}\right) = O(\epsilon^2) \sum_j w_jF_j
\end{eqnarray}

In the rest of the analysis, we show that this setting of dual variables satisfies the dual constraints.

\subsubsection{Delay Lemmas}
The dual constraints need us to argue about the weighted delay induced by $j$ till any point $t$. For this purpose, we define for any $t^* \in [r_j,C_j]$ the following:
$$ \Delta^1_j(t^*)  = \int_{t=r_j}^{t^*} \delta_j(t) dt  \qquad \mbox{and} \qquad \Delta^2_j(t^*)  = \int_{t=t^*}^{C_j} \delta_j(t) dt$$

The following propositions have elementary proofs which have been omitted.

\begin{proposition}
	\label{prop:lower}
	Consider any integer $k \geq 0$, and $ 0 \leq w \leq 1$, then $(1-w)^k \geq 1 -kw$.
\end{proposition}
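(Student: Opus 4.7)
The plan is to prove this classical Bernoulli-type inequality by induction on $k$, which is the most direct route given the integrality assumption on $k$. The base case $k=0$ is immediate since $(1-w)^0 = 1 = 1 - 0 \cdot w$. For the inductive step, I would assume $(1-w)^k \geq 1 - kw$ and multiply both sides by $(1-w)$; the key point is that $(1-w) \geq 0$ because of the assumption $w \leq 1$, so the inequality direction is preserved.

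After this multiplication the right-hand side becomes $(1-kw)(1-w) = 1 - (k+1)w + kw^2$, and since $kw^2 \geq 0$ (as $k \geq 0$ and $w^2 \geq 0$), dropping this nonnegative term yields the desired bound $(1-w)^{k+1} \geq 1 - (k+1)w$. There is no real obstacle here: the hypothesis $w \leq 1$ is used precisely once, to ensure we can multiply an inequality by $(1-w)$ without flipping it, and the hypothesis $w \geq 0$ (combined with $k$ being a nonnegative integer) is used to discard the quadratic term.

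An alternative route would be to expand $(1-w)^k$ using the binomial theorem and argue that the sum of the remaining terms $\sum_{i=2}^{k} \binom{k}{i}(-w)^i$ is nonnegative, but this requires pairing consecutive terms and an extra case analysis based on parity, so the induction proof is cleaner and shorter. I would therefore present only the two-line induction argument.
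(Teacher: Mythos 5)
Your induction argument is correct and complete: the base case is trivial, multiplying by $(1-w)\geq 0$ (valid since $w\leq 1$) preserves the inequality, and discarding the $kw^2\geq 0$ term finishes the step. The paper omits the proof of this proposition, calling it elementary, so there is nothing in the paper to compare against; your proof is the standard one and fills the gap cleanly.
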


\begin{proposition}
	\label{prop:upper}
%	Consider any integer $k \geq 0$, and $ 0 \leq w \leq 1$, then $(1 + w)^k \geq 1 + kw$.
% xxx
	Consider any integer $k \geq 0$, and $w, w' \geq 0 $, then $(w + w')^k \geq w^k + k w^{k-1} w'$.
\end{proposition}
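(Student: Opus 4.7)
The plan is to prove this via the binomial expansion, which is the most direct route since $k$ is a nonneg integer and both $w, w' \ge 0$. First I would write
\[
(w + w')^k = \sum_{i=0}^{k} \binom{k}{i} w^{k-i} (w')^i,
\]
valid for integer $k \ge 0$ by the binomial theorem. Because $w, w' \ge 0$, every summand on the right is nonnegative, so the sum is bounded below by any subset of its terms.

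Next I would isolate the first two terms of the expansion, namely $i=0$ and $i=1$, giving $\binom{k}{0} w^k + \binom{k}{1} w^{k-1} w' = w^k + k w^{k-1} w'$, and discard the remaining nonnegative tail to get the desired inequality $(w+w')^k \ge w^k + k w^{k-1} w'$.

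The only mildly delicate point is the corner case $k = 0$ (and the boundary $w = 0$ with $k \ge 1$): when $k=0$ the claimed lower bound is $1 + 0 = 1$, matching $(w+w')^0 = 1$; when $k \ge 1$ and $w = 0$ the expression $k w^{k-1} w'$ is interpreted as $0$ (for $k=1$ it equals $w'$, and $(0 + w')^1 = w' \ge w'$ holds), so no issue arises. Alternatively, if one prefers to avoid the binomial theorem, one can induct on $k$: the $k=0$ case is trivial, and in the inductive step $(w+w')^{k+1} = (w+w')(w+w')^k \ge (w+w')(w^k + k w^{k-1} w') = w^{k+1} + k w^k w' + w^k w' + k w^{k-1} (w')^2 \ge w^{k+1} + (k+1) w^k w'$, using $k w^{k-1} (w')^2 \ge 0$. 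I expect no real obstacle; the entire proof is a one-line appeal to the binomial expansion.
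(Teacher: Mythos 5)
Your proof is correct, and the paper in fact omits the proof of this proposition, stating only that it is elementary; the binomial expansion you give (keeping the $i=0,1$ terms and discarding the nonnegative tail) is exactly the kind of one-line argument the authors had in mind, and your corner-case discussion and inductive alternative are both sound.
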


The first Delay Lemma bounds the quantity $\Delta^1_j(t^*)$ as follows:
\begin{lemma}[\textbf{First Delay Lemma}]
\label{firstdelay}
For any time instant $t^* \in [r_j,C_j]$ and for any job $j$, 
$$ \Delta^1_j(t^*) \leq \quad(k+2) \cdot w_{j}\cdot(t^*-r_j) $$
\end{lemma}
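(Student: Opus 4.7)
The plan is to compute $\delta_j(t)$ in closed form, establish the pointwise bound $\delta_j(t)\le (k+2)w_j$, and integrate.

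First I would expand $\delta_j(t)$ using the rate formula \eqref{eqn:jobshare}. Fix a time $t$, let $i=\sigma(j,t)$, $\W=\W_j(t)$, $W=W(i,t)$. By definition, $\nu_j(t)=\eta\bigl((\W+w_j)^{k+1}-\W^{k+1}\bigr)/W^{k+1}$. The key observation is that the jobs in $\S_j(t)$ are precisely the prefix of $i$'s virtual queue whose cumulative weights rise from $0$ up to $\W$, so summing \eqref{eqn:jobshare} over $j'\in \S_j(t)$ telescopes:
$$\sum_{j'\in \S_j(t)} \nu_{j'}(t) \;=\; \eta\cdot\frac{\W^{k+1}}{W^{k+1}}.$$
Substituting into $\delta_j(t)=\tfrac{1}{\eta}\bigl((\W+w_j)\nu_j(t)+w_j\sum_{j'\in\S_j(t)}\nu_{j'}(t)\bigr)$ and using $(\W+w_j)\W^{k+1}-w_j\W^{k+1}=\W^{k+2}$, the cross terms cancel and I get the clean identity
$$\delta_j(t)\;=\;\frac{(\W_j(t)+w_j)^{k+2}-\W_j(t)^{k+2}}{W(\sigma(j,t),t)^{k+1}}.$$

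Second, by convexity of $x\mapsto x^{k+2}$ applied on $[\W,\W+w_j]$ (equivalently, the mean-value theorem for this function), the numerator is at most $(k+2)w_j(\W_j(t)+w_j)^{k+1}$. Since $\S_j(t)\cup\{j\}\subseteq J_i(t)$, we have $\W_j(t)+w_j\le W(\sigma(j,t),t)$, so the ratio $(\W_j(t)+w_j)^{k+1}/W(\sigma(j,t),t)^{k+1}\le 1$. Combining these gives the pointwise bound
$$\delta_j(t)\;\le\;(k+2)\,w_j.$$

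Finally, integrating from $r_j$ to $t^*$ yields $\Delta^1_j(t^*)\le (k+2)w_j(t^*-r_j)$, which is the claim. The only step that requires care is the telescoping computation that reduces the two-term expression for $\delta_j(t)$ to $((\W+w_j)^{k+2}-\W^{k+2})/W^{k+1}$; once that identity is in hand, the bound is an immediate consequence of convexity and the trivial weight comparison $\W_j(t)+w_j\le W(i,t)$.
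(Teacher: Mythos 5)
Your proof is correct, and it follows a genuinely different route than the paper's. The paper bounds the two terms of $\delta_j(t)$ separately: it crudely bounds $\sum_{j'\in\S_j(t)}\nu_{j'}(t)$ by $\eta$, then replaces $W(\sigma(j,t),t)$ with $\W_j(t)+w_j$ in the denominator of $\nu_j(t)$, and finally invokes the Bernoulli-type inequality $(1-w)^{k+1}\geq 1-(k+1)w$ (their Proposition~\ref{prop:lower}) to extract the factor $(k+1)w_j$, with the remaining $w_j$ coming from the crude sum bound, for a total of $(k+2)w_j$. You instead compute $\sum_{j'\in\S_j(t)}\nu_{j'}(t)$ \emph{exactly} by telescoping (which is valid, since $\S_j(t)$ is precisely the prefix of the virtual queue whose cumulative weights run from $0$ to $\W_j(t)$), and observe that the two terms in $\delta_j(t)$ combine into the clean closed-form identity $\delta_j(t)=\bigl((\W_j(t)+w_j)^{k+2}-\W_j(t)^{k+2}\bigr)/W(\sigma(j,t),t)^{k+1}$. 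From there the entire factor $(k+2)w_j$ falls out of a single mean-value-theorem (convexity) estimate on $x\mapsto x^{k+2}$, followed by $\W_j(t)+w_j\le W(\sigma(j,t),t)$. Your approach buys a tighter intermediate picture (an exact formula for $\delta_j(t)$ rather than a term-by-term upper bound), eliminates the need for Proposition~\ref{prop:lower}, and makes the provenance of the $(k+2)$ constant transparent; the paper's version is slightly more in-line with the bounds it reuses in the Second Delay Lemma, but yours is the cleaner derivation for this lemma in isolation.
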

\begin{proof} 
{\small
\begin{eqnarray*}
\Delta^1_j(t^*) &=& \frac{1}{\eta} \int^{t^*}_{t = r_j} \left( \nu_{j}(t) \cdot (\W_j(t) + w_j) + w_{j} \cdot \left(\displaystyle \sum_{j' \in \S_{j}(t)} \nu_{j'}(t) \right) \right) dt  \\
% xxx
&\leq& \frac{1}{\eta} \int^{t^*}_{t = r_j} \left( \eta \cdot  \frac{ (\W_{j}(t) + w_j)^{k+1} - \W_{j}(t)^{k+1}} {W(\sigma(j,t),t)^{k+1}} \cdot (\W_{j}(t) + w_j)+ w_{j} \cdot \eta \right) dt  \qquad \mbox{[Definition of $\nu$]} \\
&\leq& \int^{t^*}_{t = r_j} \left( \frac{(\W_{j}(t) + w_j)^{k+1} - \W_{j}(t)^{k+1}} {(\W_{j}(t) + w_j)^{k+1}} \cdot (\W_{j}(t) + w_j) + w_{j} \right) dt \qquad \mbox{[Since $W(\sigma(j,t),t) \geq \W_j(t) {\huge +} w_j$]} \\
&=& \int^{t^*}_{t = r_j} \left( \left (1 - \left(1 - \frac{w_j}{\W_{j}(t) + w_j}\right)^{k+1} \right) \cdot (\W_{j}(t) + w_j) + w_j \right )dt\\
&\leq& \int^{t^*}_{t = r_j} \left( \frac{w_{j} \cdot (k+1)}{(\W_{j}(t) + w_j)} \cdot (\W_{j}(t) + w_j) + w_{j} \right) dt  \qquad \mbox{[Proposition~\ref{prop:lower}]} \nonumber \\
&=& (k+2)\cdot w_{j}\cdot (t^*-r_j) 
\end{eqnarray*}}
\end{proof}

Let $p_j(t^*) = \int^{C_j}_{t = t^*}\ell_{\sigma(j,t)j} \cdot \nu_{j}(t) dt$ denote the residual size of job $j$ at time $t^*$. The second Delay Lemma states that total marginal increase in the algorithm's cost due to job $j$ till its completion is upper bounded by the marginal increase in the algorithm's  cost  if the job $j$ stays on machine $\sigma(j,t^*)$ till its completion. However, as noted before, marginal increase in the cost of the algorithm on a single machine is inversely proportional to the job'virtual speed. The proof of the second Delay Lemma hinges crucially on the fact that a job selfishly migrates to a new machine only if its virtual utility increases.  In fact, the statement of this lemma implies the correctness of our setting of virtual utility.

\begin{lemma}[\textbf{Second Delay Lemma}]
\label{lem1}
For any time instant $t^* \in [r_j,C_j]$ and for any job $j$,  let $i^* = \sigma(j,t^*)$ denote the machine to which job $j$ is assigned at time $t^*$. Then:
$$ \Delta^2_j(t^*) \le  \frac{1}{\eta} \cdot\frac{k+2}{k+1} \cdot \frac{p_j(t^*)}{\phi(j,t^*)} \le \frac{1}{\eta} \cdot\frac{k+2}{k+1} \cdot p_j \cdot \frac{\W_j(t^*)+ w_j}{l_{i^*j}} $$
\end{lemma}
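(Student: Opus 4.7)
The plan is to reduce the lemma to a pointwise upper bound on $\delta_j(t)$ for $t \in [t^*, C_j]$, then integrate. The second inequality is essentially a restatement of the definitions: $1/\phi(j,t^*) = (\W_j(t^*)+w_j)/\ell_{i^*j}$, and $p_j(t^*) \leq p_j$ since the residual size cannot exceed the total size. So the substance is in the first inequality, and the goal is to prove the pointwise bound
\[
\delta_j(t) \;\leq\; \frac{1}{\eta}\cdot\frac{k+2}{k+1}\cdot \frac{\ell_{\sigma(j,t)j}\,\nu_j(t)}{\phi(j,t^*)} \qquad \forall t \in [t^*, C_j],
\]
after which integrating the right-hand side from $t^*$ to $C_j$ yields $p_j(t^*)/\phi(j,t^*)$ times the constant, giving the lemma.

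The first ingredient is the monotonicity of virtual utility, \lemref{increase}, which gives $\phi(j,t) \geq \phi(j,t^*)$ for $t \geq t^*$. Unpacking the definition of $\phi$ at the current machine $\sigma(j,t)$, this is exactly $\W_j(t)+w_j \leq \ell_{\sigma(j,t)j}/\phi(j,t^*)$. This is the one and only place where the selfish best-response dynamics enters the argument: even though $j$ may migrate many times in $[t^*,C_j]$, the migration rule guarantees that the ``denominator'' $\W_j(t)+w_j$ cannot grow relative to $\ell_{\sigma(j,t)j}$ beyond its value at $t^*$. Given this, it suffices to prove the purely combinatorial bound
\[
\delta_j(t) \;\leq\; \frac{1}{\eta}\cdot\frac{k+2}{k+1}\cdot (\W_j(t)+w_j)\,\nu_j(t).
\]

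The second ingredient handles the ``$w_j\sum_{j'\in\S_j(t)}\nu_{j'}(t)$'' term appearing in $\delta_j(t)$. Here I would exploit a telescoping identity: writing out \Eqref{jobshare} and summing over jobs in ranks $1,\ldots,\pi_j(t)-1$ on machine $i=\sigma(j,t)$, consecutive $\W_{j'}(t)+w_{j'}$ cancel against $\W_{j''}(t)$ of the next-ranked job, yielding
\[
\sum_{j'\in\S_j(t)} \nu_{j'}(t) \;=\; \eta \cdot \frac{\W_j(t)^{k+1}}{W(i,t)^{k+1}}.
\]
Then the required inequality $w_j\sum_{j'\in\S_j(t)}\nu_{j'}(t) \leq \tfrac{1}{k+1}(\W_j(t)+w_j)\nu_j(t)$ reduces, after clearing $W(i,t)^{k+1}$, to showing
\[
(k+1)\,w_j\,\W_j(t)^{k+1} \;\leq\; (\W_j(t)+w_j)\bigl[(\W_j(t)+w_j)^{k+1}-\W_j(t)^{k+1}\bigr],
\]
which follows in one line from \proref{upper} applied with $w=\W_j(t)$, $w'=w_j$ and then the trivial bound $\W_j(t)+w_j\geq \W_j(t)$. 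Adding the $(\W_j(t)+w_j)\nu_j(t)$ term that is already present in $\delta_j(t)$ gives the factor $1+\tfrac{1}{k+1}=\tfrac{k+2}{k+1}$.

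Combining the two ingredients yields the pointwise bound, and integrating against $\int_{t^*}^{C_j} \ell_{\sigma(j,t)j}\nu_j(t)\,dt = p_j(t^*)$ closes the first inequality. The main conceptual obstacle, in my view, is not the algebra but recognizing that the Nash/monotonicity property of $\phi$ is exactly the right tool to handle the fact that $\sigma(j,\cdot)$ changes over $[t^*,C_j]$; without \lemref{increase}, one would have no way to relate $\ell_{\sigma(j,t)j}/(\W_j(t)+w_j)$ at later times to the corresponding quantity at $t^*$, and the argument would collapse. The telescoping identity is a clean consequence of the \slaps$(k)$ rate formula, and the inequality on $(\W_j(t)+w_j)^{k+1}-\W_j(t)^{k+1}$ is the natural $k$-analogue of the $k=0$ (weighted round-robin) case.
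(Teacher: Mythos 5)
Your proof is correct and follows essentially the same route as the paper's: expand $\delta_j(t)$ using the telescoping identity $\sum_{j'\in\S_j(t)}\nu_{j'}(t)=\eta\,\W_j(t)^{k+1}/W(\sigma(j,t),t)^{k+1}$, bound the cross term via Proposition~\ref{prop:upper} to get the factor $\tfrac{k+2}{k+1}$, then invoke the monotonicity of $\phi$ (Lemma~\ref{lem:increase}) to pull $1/\phi(j,t^*)$ out of the integral and identify $\int_{t^*}^{C_j}\ell_{\sigma(j,t)j}\nu_j(t)\,dt$ with $p_j(t^*)$. The only difference is presentational — you package the algebra as a pointwise bound on $\delta_j(t)$ before integrating, whereas the paper manipulates the integrand in place — but the ingredients and their roles (in particular your observation that Lemma~\ref{lem:increase} is precisely what tames the time-varying $\sigma(j,\cdot)$) are identical.
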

\begin{proof}
{\small
\begin{eqnarray*} \vspace{-5mm}
\Delta^2_j(t^*) &=& \frac{1}{\eta} \int^{C_j}_{t = t^*} \left( \nu_{j}(t) \cdot (\W_{j}(t) + w_j) + w_{j} \cdot \left(\displaystyle \sum_{j' \in \S_{j}(t)} \nu_{j'}(t) \right) \right) dt  \\
&=& \frac{1}{\eta}  \int^{C_j}_{t = t^*} \left( \nu_{j}(t) \cdot  (\W_{j}(t) + w_j) + \eta \cdot w_{j} \cdot \frac{\W_{j}(t)^{k+1}}{W(\sigma(j,t),t)^{k+1}} \right) dt   \\
&=& \frac{1}{\eta}  \int^{C_j}_{t = t^*} \nu_{j}(t) \cdot \left( \W_{ j}(t) + w_j + w_j \cdot \frac{\W_{j}(t)^{k+1}} {(\W_{j}(t) + w_j)^{k+1} - \W_{j}(t)^{k+1}} \right) dt  \nonumber \\
&\leq& \frac{1}{\eta}  \int^{C_j}_{t = t^*} \nu_{j}(t) \cdot \left( \W_{j}(t) + w_j + \frac{\W_{j}(t) + w_j}{k+1} \right) dt  \qquad \mbox{[Proposition~\ref{prop:upper}]}\nonumber \\
&=&  \frac{1}{\eta} \cdot \frac{k+2}{k+1} \cdot \int^{C_j}_{t = t^*} \ell_{\sigma(j,t)j}\cdot \nu_{j}(t)  \cdot \frac{1}{\phi(j,t)} dt \nonumber \\
&\leq&  \frac{1}{\eta} \cdot \frac{k+2}{k+1} \cdot \frac{1}{\phi(j,t^*)} \cdot \int^{C_j}_{t = t^*} \ell_{\sigma(j,t)j}\cdot \nu_{j}(t) dt \qquad \mbox{[Lemma~\ref{lem:increase}]}\nonumber \\
&=&  \frac{1}{\eta} \cdot\frac{k+2}{k+1} \cdot \frac{p_j(t^*)}{\phi(j,t^*)}   \le \frac{1}{\eta} \cdot\frac{k+2}{k+1} \cdot p_j \cdot \frac{\W_j(t^*) + w_j}{l_{i^*j}} 
\end{eqnarray*}}
\end{proof}

Note that the previous two lemmas imply the following by summation: 
\begin{lemma}
\label{lem:delay}
For any time instant $t \in [r_j,C_j]$ and job $j$ that is assigned to machine $i^* = \sigma(j,t)$, we have:
$$ \Delta_j = \Delta^1_j(t) + \Delta^2_j(t) \le (k+2) \cdot w_{j}\cdot(t-r_j) + \frac{1}{\eta} \cdot\frac{k+2}{k+1} \cdot p_j \cdot \frac{\W_j(t) + w_j}{l_{i^*j}} $$
\end{lemma}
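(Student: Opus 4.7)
The plan is to derive this lemma as a direct consequence of the First and Second Delay Lemmas (Lemmas~\ref{firstdelay} and \ref{lem1}), exploiting the additive decomposition of $\Delta_j$ at any intermediate time $t^*$. There is essentially no new idea needed; the work has already been done in proving the two delay bounds.

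First, I would observe that by the very definitions $\Delta_j = \int_{r_j}^{C_j} \delta_j(u)\,du$, $\Delta^1_j(t) = \int_{r_j}^{t} \delta_j(u)\,du$, and $\Delta^2_j(t) = \int_{t}^{C_j} \delta_j(u)\,du$, so for any $t \in [r_j, C_j]$ the integral splits cleanly as
\[
\Delta_j \;=\; \Delta^1_j(t) + \Delta^2_j(t).
\]
Since $\delta_j$ is a nonnegative integrand (both $\nu_j$ and $\nu_{j'}$ are nonnegative rates and the weights $w_j, w_{j'}$ are nonnegative), no sign issues arise in splitting the integral.

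Next, I would apply the two delay lemmas to each of these pieces. The First Delay Lemma (Lemma~\ref{firstdelay}), applied at the instant $t^* = t$, yields
\[
\Delta^1_j(t) \;\le\; (k+2)\cdot w_j \cdot (t - r_j).
\]
The Second Delay Lemma (Lemma~\ref{lem1}), applied at the same instant with $i^* = \sigma(j,t)$, yields
\[
\Delta^2_j(t) \;\le\; \frac{1}{\eta}\cdot\frac{k+2}{k+1}\cdot p_j \cdot \frac{\W_j(t) + w_j}{\ell_{i^*j}},
\]
where we use the coarser of the two bounds stated in Lemma~\ref{lem1} (namely the one involving $p_j$ instead of the residual size $p_j(t)$), which is legal because $p_j(t) \le p_j$ and $\phi(j,t) = \ell_{i^*j}/(\W_j(t)+w_j)$ by definition.

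Finally, I would simply add the two displayed inequalities to obtain the claimed bound on $\Delta_j = \Delta^1_j(t) + \Delta^2_j(t)$. The only step requiring care is making sure that the machine $i^*$ used in the Second Delay Lemma is indeed the machine to which $j$ is assigned at the particular time $t$ (as in the statement of Lemma~\ref{lem:delay}), and that the upgrade from $p_j(t)/\phi(j,t)$ to $p_j \cdot (\W_j(t)+w_j)/\ell_{i^*j}$ uses $p_j(t) \le p_j$ and the definition $\phi(j,t) = \ell_{i^*j}/(\W_j(t)+w_j)$; both are immediate. I do not anticipate any real obstacle --- the hard work resides entirely in proving Lemmas~\ref{firstdelay} and \ref{lem1}, and the present lemma is a one-line combination.
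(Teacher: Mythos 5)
Your proof is correct and follows exactly the paper's route: the paper derives this lemma by summing the First and Second Delay Lemmas, which is precisely what you do (including the use of the coarser $p_j$ bound from Lemma~\ref{lem1}).
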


\subsubsection{Checking the Feasibility of Constraints}
Now it  remains to prove that  constraints of \ref{Dual} are satisfied.  To see this, fix job $j$ and time instant $t$. We consider two cases.

\noindent \textbf{Case 1:} Machine $i = \sigma(j,t)$. Then
\begin{eqnarray}
\alpha_j - \frac{p_j}{\ell_{ij}}\beta_{it} &=& \frac{\Delta_j}{k+2} - \frac{p_j}{\ell_{ij}}\cdot \frac{W(i,t)}{k+3}  \nonumber \\
&\leq& w_{j} \cdot (t-r_j) + \frac{p_j}{\eta \cdot (k+1)} \cdot \frac{\W_{j}(t)+ w_{j}}{\ell_{ij}} - \frac{p_j}{\ell_{ij}}\cdot \frac{W(i,t)}{k+3}  \qquad \mbox{[Lemma~\ref{lem:delay}]} \nonumber \\
&\leq& w_{j} \cdot (t - r_j) \qquad [\mbox{since } \eta = 1+3\epsilon,k = 1/\epsilon]\nonumber
\end{eqnarray}

\noindent \textbf{Case 2:} Machine $i \neq \sigma(j,t)$. Then
\begin{eqnarray}
\alpha_j - \frac{p_j}{\ell_{ij}}\beta_{it} &=& \frac{\Delta_j}{k+2} - \frac{p_j}{\ell_{ij}}\cdot \frac{W(i,t)}{k+3} \nonumber \\
&\leq& w_{j} \cdot (t-r_j) + \frac{1}{\eta} \cdot \frac{p_j}{k+1}\cdot \frac{\W_{j}(t) + w_{j}}{\ell_{\sigma(j,t)j}} - \frac{p_j}{\ell_{ij}}\cdot \frac{W(i,t)}{k+3} \qquad \mbox{[Lemma~\ref{lem:delay}]} \nonumber \\
& \leq & w_{j} \cdot (t-r_j) + \left (\frac{p_j}{\eta\cdot(k+1)}\frac{W(i,t)+w_{j}}{\ell_{ij}} - \frac{p_j}{k+3} \cdot \frac{W(i,t)}{\ell_{ij}} \right)  \nonumber \\
&\leq& w_{j}\cdot(t-r_j) + \frac{w_{j}p_j}{ \ell_{ij}} \qquad [\mbox{since } \eta = 1+3\epsilon, k = 1/\epsilon]\nonumber \nonumber
\end{eqnarray}

The penultimate inequality follows since the machine $\sigma(j,t)$ maximizes the virtual utility of job $j$ at time $t$. Therefore, the dual constraints are satisfied for all time instants $t$ and all jobs $j$, and we derive that {\sc SelfishMigrate} is $(1+\eps)$-speed augmentation, $O(1/\epsilon^2)$-competitive against $\lpp$, completing the proof of the first part of Theorem \ref{thm:flow-no-scalable}.

\smallskip
\noindent \textbf{Polynomial Time Algorithm and Minimizing Reassignments.}
A careful observation of the analysis reveals that to satisfy dual constraints, each job need not be on the machine which gives the highest virtual utility. We can change the policy {\sc SelfishMigrate} so that a job migrates to a different machine only if its virtual utility increases by a factor of at least $(1+\epsilon)$. Note that this does not change the monotonicity properties of the virtual utility of a job, hence the entire analysis follows (with the speed augmentation $\eta$ increased by a factor of $1+ \eps$). This also implies that for any job $j$, the total number of migrations is at most $(\log_{(1+\epsilon)} W + \log_{(1+\epsilon)}n)$, where $W$ is the ratio of the maximum weight of all jobs to the minimum weight. Omitting the simple details, we complete the proof of Theorem \ref{thm:flow-no-scalable}.

\section{Weighted Flow-time and Energy for Arbitary Power Functions}
\label{sec:energy}
In this section we present a simple extension to {\sc SelfishMigrate} to get a scalable algorithm for minimizing the sum of weighted flow-time and energy for arbitrary power functions. The problem formulation is the same as in Section~\ref{sec:unrelated}, with an added feature.  Each machine $i$ can be run at a variable rate $S(i,t)$ by paying an energy cost of $f_i(S(i,t))$, where $f_i$ is a machine dependent, convex increasing function (also called as power function). The rate $S(i,t)$ can be partitioned among the jobs $J_i(t)$, so that $\sum_{j \in J_i(t)} \nu_j(t) \le S(i,t)$. As before, job $j \in J_i(t)$ runs at speed $q_j(t) = \nu_j(t) \times \ell_{ij}$. 

As in Section~\ref{sec:unrelated}, we define the completion time $C_j$ of job $j$ to satisfy $\int_{t=r_j}^{C_j} q_j(t) dt = p_j$. As before, preemption and migration of jobs are allowed without any penalty, but  each job must be assigned to a single machine at every instant of time. The scheduler is not aware of the processing lengths $p_j$.  Our objective is to minimize sum of weighted flow-time and energy consumed: 
$$ \mbox{Objective } =   \sum_{j} w_j F_j + \sum_{i}\int_{t}f_i(S(i,t)) dt$$

In a resource augmentation analysis, we assume that the online algorithm  gets $(1+\epsilon)$ more speed, for any $\epsilon > 0$, for {\em consuming the same energy.} Alternatively, the offline benchmark has to pay a cost of $f_i((1+\epsilon)s)$ if it runs machine $i$ at a rate of $s$. Speed augmentation is required to achieve meaningful competitive ratios for the case of arbitrary power functions. To elaborate on this point, consider a function $f(s)$ that takes an infinitesimal value  in the interval $0 \leq s \leq 1$ and sharply increases when $s > 1$. For such a power function, any competitive online scheduler has to be optimal at each instant of time unless we give it more resources.  A scalable algorithm in the speed augmentation setting implies algorithms with small competitive ratios when the energy cost function can be approximated by polynomials. In particular, the result translates to an $O(\gamma^2)$-competitive algorithm (without any resource augmentation) when the power function is $f_i(s) = s^{\gamma}$. 

Let $g_i$ be the inverse of power function $f_i$. Note that $g$ is an increasing concave function. Before we describe our algorithm, we make the following simple observation regarding concave functions.

\begin{proposition}
\label{cpl1}
For any increasing concave function $g$ with $g(0) = 0$, $\frac{g(w)}{w}$ is decreasing in $w$.
\end{proposition}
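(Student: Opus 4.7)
The plan is to prove the proposition directly from the definition of concavity, without assuming differentiability, since the statement only requires $g$ to be increasing and concave. The geometric intuition is that $g(w)/w$ is the slope of the secant line from the origin $(0,0)$ to the point $(w, g(w))$ on the graph of $g$, and for a concave function passing through the origin, these secant slopes are non-increasing in $w$.

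First I would fix any $0 < w_1 < w_2$ and express $w_1$ as a convex combination of $0$ and $w_2$, namely $w_1 = \lambda \cdot w_2 + (1-\lambda) \cdot 0$ with $\lambda = w_1/w_2 \in (0,1]$. Applying the definition of concavity to $g$ at this convex combination gives
\[
g(w_1) \;=\; g\bigl(\lambda w_2 + (1-\lambda)\cdot 0\bigr) \;\geq\; \lambda\, g(w_2) + (1-\lambda)\, g(0) \;=\; \lambda\, g(w_2),
\]
where the last equality uses the hypothesis $g(0) = 0$. Substituting $\lambda = w_1/w_2$ and dividing through by $w_1 > 0$ yields $g(w_1)/w_1 \geq g(w_2)/w_2$, which is exactly the statement that $g(w)/w$ is non-increasing (decreasing in the weak sense).

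I do not expect any real obstacle here: the only subtle point is the interpretation of ``decreasing,'' which in this context clearly means non-increasing since a constant concave function through the origin (namely $g\equiv 0$) gives a constant ratio. The proof needs no more than one display, uses only $g(0)=0$ and concavity, and in fact does not even require the ``increasing'' hypothesis.
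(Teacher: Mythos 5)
Your proof is correct, and it is the standard argument: the paper states this proposition without proof (treating it as elementary), and the secant-slope argument you give — writing $w_1 = \lambda w_2 + (1-\lambda)\cdot 0$, invoking concavity, and using $g(0)=0$ — is exactly the canonical way to fill that gap. Your two side remarks are also accurate: ``decreasing'' here must be read as non-increasing (the paper uses the conclusion only to pass from $W(i,t)$ to the smaller $\W_j(t)+w_j$ in the dual feasibility check, for which weak monotonicity suffices), and the ``increasing'' hypothesis is indeed superfluous for this particular claim, though it is needed elsewhere in the section (e.g.\ in the steps of Lemma~\ref{powerDL1} marked ``$g$ is increasing'').
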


\subsection{The {\sc SelfishMigrate-Energy} Algorithm}
Our algorithm {\sc SelfishMigrate-Energy}, is very similar to the algorithm {\sc SelfishMigrate}, and we only outline the differences with Section~\ref{sec:unrelated}. The most important difference is the policy that sets the speeds of the machines.  

\paragraph{Speed Scaling Policy:} We set the speed of machine $i$ at time $t$, denoted by $S(i,t)$, such that the total energy cost is equal to the total weight of jobs at time $t$.
\begin{eqnarray}
 f_i(S(i,t)) = W(i,t)  \qquad \mbox{or equivalenetly,} \qquad  S(i,t) = g_i(W(i,t))
\end{eqnarray}

This is same as the speed scaling policy used in \cite{NH2009,AnandGK12}. Our speed scaling policy easily implies that the total energy cost of the schedule is equal to the weighted flow-time. Hence, we will only be concerned with the total weighted flow-time of our schedule.

\paragraph{Single Machine Policy.} The remaining components of the algorithm remain similar to {\sc SelfishMigrate} algorithm. We briefly mention the differences.

Each machine runs \slaps$(k)$ where $k = \frac{1}{\epsilon}$. In this policy, the notions of virtual queues, rank of a job, and the arrival and departure policies (with associated notation) -- remain the same. In particular, a job that arrives or migrates to a machine are placed at the tail of the virtual queue and assigned the highest rank on the machine. At time instant $t$, the total processing rate $S(i,t) = g_i(W(i,t))$ of the machine $i$ is divided among the jobs in $J_i(t)$ as follows. 
\begin{equation}
  \label{eqn:jobshare2}
  \nu_j(t)  :=  g_i(W(i,t)) \cdot \frac{(\W_{j}(t)+w_{j})^{k+1} - \W_j^{k+1} }{W(i,t)^{k+1}} 
\end{equation}
\noindent
As before, this implies job $j \in J_i(t)$ is processed at rate $\ell_{ij} \ \nu_j(t)$. 

\paragraph{Virtual Utility of Jobs and Selfish Migration.} Consider a job $j \in J_i(t)$ at time $t$. Its virtual utility is defined as:
$$ \phi(j,t) = g_i(\W_j(t) + w_j) \cdot \frac{\ell_{ij}}{\W_j(t) + w_j}$$
Using this virtual utility, the jobs perform sequential best response dynamics, migrating to other machines if it improves its virtual utility. As before, this leads to a Nash equilibrium every step.\footnote{Note that unlike Section~\ref{sec:unrelated} where migrations can only happen on departure of jobs from the system, migrations  can now also happen on arrival of jobs into the system.} If a job moves out of a machine, the weights $\W_{j'}(t)$ of other jobs on the machine either stay the same or decrease. Using Proposition~\ref{cpl1}, this implies the virtual utility of other jobs on the machine either remains the same or increases. Therefore, similar to Lemma \ref{lem:increase}, we easily get the monotonicity of the virtual utilities of jobs.
\begin{lemma}
\label{lem:powerincrease}
For all jobs $j$, $\phi(j,t)$ is non-decreasing over the interval $t \in [r_j, C_j]$.
\end{lemma}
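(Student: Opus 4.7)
Fix a job $j$ and let $i=\sigma(j,t)$ be the machine currently hosting it. Since $\ell_{ij}$ and $w_j$ are fixed once we know $(i,j)$, the virtual utility $\phi(j,t)=g_i(\W_j(t)+w_j)\cdot \ell_{ij}/(\W_j(t)+w_j)$ depends on time only through (a) the identity of $i=\sigma(j,t)$ and (b) $\W_j(t)$, the total weight of jobs strictly ahead of $j$ in $i$'s virtual queue. The approach is to enumerate every event on $[r_j,C_j]$ that can alter these two quantities and check that each event is utility-nondecreasing. The events split into three types: (i) $j$ itself migrates to a new machine $d$; (ii) a job $j'\in \S_j(t)$ departs, either because it completes or because it migrates away from $i$; (iii) all remaining events, namely, an arrival at $i$, a migration into $i$, or the departure of a job behind $j$ in $i$'s queue. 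Events on machines other than $\sigma(j,t)$ clearly do not affect $\phi(j,t)$.

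Case (iii) is trivial: arrivals and incoming migrations are appended to the tail of $i$'s virtual queue, so they lie behind $j$ and do not touch $\W_j(t)$; departures behind $j$ likewise do not touch $\W_j(t)$. Hence $\phi(j,t)$ is literally unchanged. Case (i) is handled by the migration rule itself: $j$ moves only when its virtual utility strictly increases, so $\phi(j,t)$ cannot decrease at such an instant. The only substantive case is (ii), where $\W_j(t)$ drops by $w_{j'}$. Writing $w=\W_j(t)+w_j$ just before the event and $w'=w-w_{j'}<w$ just after, the required inequality $\phi(j,t^+)\ge \phi(j,t^-)$ reduces to
\[
\frac{g_i(w')}{w'} \;\ge\; \frac{g_i(w)}{w}.
\]
Since $g_i$ is increasing, concave, and satisfies $g_i(0)=0$, Proposition~\ref{cpl1} says $g_i(x)/x$ is a decreasing function of $x$, and $w'<w$ gives the inequality directly.

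The main obstacle, and the only place where the proof differs materially from the argument behind Lemma~\ref{lem:increase}, is case (ii). In the non-energy setting the utility $\ell_{ij}/(\W_j(t)+w_j)$ is manifestly decreasing in $\W_j(t)$, so a drop in $\W_j(t)$ immediately gives a rise in utility. Here the extra factor $g_i(\W_j(t)+w_j)$ in the numerator pulls in the opposite direction as $\W_j(t)$ decreases, and one must check that the ratio still moves the right way; Proposition~\ref{cpl1} is exactly the concavity fact that makes this work. This is precisely why the virtual utility in Section~\ref{sec:energy} is defined with the $g_i$ factor attached, rather than being a naive speed-per-weight quantity.

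Finally, the sequential best-response dynamics at a fixed time instant are a sequence of events of types (i)--(iii), and between time instants only events of type (ii) or (iii) occur (through job completions). Since each individual event is utility-nondecreasing for $j$, $\phi(j,\cdot)$ is monotone non-decreasing on $[r_j,C_j]$, which is the claim.
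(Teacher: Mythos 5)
Your proof is correct and follows essentially the same route as the paper: the paper's (quite terse) argument also reduces everything to the observations that incoming arrivals and migrations are appended to the tail of the virtual queue (so $\W_j$ is untouched), that $j$'s own migrations are utility-nondecreasing by the migration rule, and that Proposition~\ref{cpl1} handles departures of jobs ahead of $j$. Your write-up is more explicit in its case enumeration and in isolating why the $g_i$ factor requires the concavity fact, but it contains no ideas beyond what the paper sketches.
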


\subsection{Analysis of {\sc SelfishMigrate-Energy}}
Since our speed scaling policy ensures that total weighted flow-time of the schedule is equal to the energy cost, we focus on bounding the total weighted flow-time of jobs. 

\paragraph{Convex Programming Relaxation.} Consider the following convex programming relaxation for the problem due to \cite{AnandGK12, Devanur014}. In this relaxation, there is a variable $s_{ijt}$ which indicates the speed at which job $j$ is processed at time $t$ on machine $i$. The constraints of \cpp state that each job needs to be completely processed and the speed of machine $i$ at time $t$ is equal to the sum of individual speeds of the jobs.

We now give a brief explanation on why the objective function lower bounds the optimal schedule within a factor of 2. See \cite{Devanur014, AnandGK12} for a complete proof of this claim. The first term in the objective function lower bounds the weighted flow-time of jobs and is similar to the term in the \lpp. The second term corresponds to the energy cost of the schedule. Here we use the fact that we analyse our algorithm in the resource augmentation model. Hence, the offline benchmark pays a cost of $f_i((1+\epsilon)\cdot s_{it})$ for running at a speed of $s_{it}$. The third term is a lowerbound on the total cost any optimal solution has to pay to schedule a job $j$, assuming that the job $j$ is the only job present in the system. This term is needed, as we do not explicitly put any constriants to forbid simultaneous processing of job $j$ across machines. Clearly, without this term, \cpp has a huge integrality gap as a single job can be processed to an extent of $\frac{1}{m}$ simulatenously on all machines. The function $f^*_i$ is the Legendre-Fenchel conjugate of function $f$ and is defined as $f^*(\beta) = max_{s} \{ s \cdot \beta - f(s) \}$. See \cite{Devanur014} for more details.

\[ \mbox{Min} \ \ \ \sum_{j} \sum_{i} \int_{t \geq r_{j}}  (t-r_j) \cdot \frac{\ell_{ij}w_{j}}{p_{j}}\cdot s_{ijt} \hspace{1mm} dt + \sum_{i} \int_{t} f_i((1+3\epsilon)s_{it}) \hspace{1mm} dt+  \sum_{i} \sum_{j} \int_{t \geq r_j} (f_i^*)^{-1}(w_j)s_{ijt} \hspace{1mm} dt \qquad \tag{\cpp} \label{primal3} \]
\[ \begin{array}{rcllr}
\displaystyle   \sum_{i} \int_{t \geq r_{j}} \frac{\ell_{ij} \cdot s_{ijt}}{p_{j}} &\geq& 1 \qquad &\forall j   \\ 
\displaystyle \sum_{j \, : \, t \geq r_{j}} s_{ijt} &=& s_{it} &\forall i,    t      \\
\displaystyle s_{ijt} &\geq& 0  &\forall i,  j,  t \, : \, t \geq r_{j} \qquad 
\end{array}  \]

We write the dual of \cpp following the framework given in \cite{Devanur014}. Similar to the dual of weighted flow-time, we have a variable $\beta_{it}$ for each machine $i$ and time instant $t$, and a variable $\alpha_j$ for each job $j$. 

\[ \mbox{Max} \ \ \ \sum_{j} \alpha_j -   \sum_{i} \int_t f_i^*\left(\frac{\beta_{it}}{1+3\epsilon} \right) \hspace{1mm} dt \qquad \tag{\cpd} \label{Dual2} \]
\[ \begin{array}{rcllr}
\displaystyle \;\;\;\;  \frac{\ell_{ij} \alpha_j}{p_{j}} - \beta_{it} &\leq& \displaystyle \frac{\ell_{ij} w_{j}}{p_{j}}(t- r_j) + (f_i^*)^{-1}(w_j) \qquad &\forall  i,j, t \, : \, t \geq r_{j}  \label{dual-constraint2}\\ 
\displaystyle  \;\;\;\; \alpha_j &\geq& 0  &\forall   j \\ 
\displaystyle  \;\;\;\; \beta_{it} &\geq& 0  &\forall   i,t  
\end{array}  \]

\noindent We need the following simple observation regarding $f$ and $f^*$ for the rest of the analysis.
\begin{lemma}
\label{convex:property}
For any increasing strictly convex function $f$ with $f(0) = 0$, let $g = f^{-1}$ and let $f^*$ be its Legendre-Fenchel conjugate. Then $f^* \left( \frac{w}{g(w)} \right) \leq w$, or $\frac{w}{g(w)} \le (f^*)^{-1}(w)$.
\end{lemma}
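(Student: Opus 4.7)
The plan is to unfold the definition $f^{*}(\beta)=\sup_{s\geq 0}\bigl\{s\beta-f(s)\bigr\}$ at the specific point $\beta=w/g(w)$ and verify the pointwise bound $s\cdot\tfrac{w}{g(w)}-f(s)\leq w$ for every $s\geq 0$. Taking the supremum then gives $f^{*}(w/g(w))\leq w$. The second form $w/g(w)\leq (f^{*})^{-1}(w)$ follows immediately, since $f^{*}$ is non-decreasing on the relevant range (as a supremum of affine functions in $\beta$).

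To verify the pointwise bound I would reparametrize by $u=f(s)\geq 0$, so that $s=g(u)$; this is legitimate because $f\colon [0,\infty)\to[0,\infty)$ is continuous and strictly increasing (being strictly convex with $f(0)=0$), hence a bijection onto its range, and outside this range the supremum in $f^{*}$ is not attained at anyway. Under this substitution, the inequality $g(u)\cdot\tfrac{w}{g(w)}-u\leq w$ rearranges to the clean form
\[
w\cdot g(u)\;\leq\;(u+w)\cdot g(w),
\]
and it suffices to prove this for all $u\geq 0$ and $w>0$.

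For this core inequality I would combine two properties of the concave increasing function $g$ with $g(0)=0$. Monotonicity gives $g(u)\leq g(u+w)$. Proposition \ref{cpl1} applied to $g$ at the pair $u+w\geq w$ yields $g(u+w)/(u+w)\leq g(w)/w$, i.e. $g(u+w)\leq (u+w)\,g(w)/w$. Chaining the two bounds and clearing the denominator produces $w\cdot g(u)\leq w\cdot g(u+w)\leq (u+w)\cdot g(w)$, which is exactly the required inequality. The case $w=0$ is trivial since both sides of the original statement vanish.

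No step is technically difficult; the only creative move is recognizing that the substitution $s=g(u)$ converts the Fenchel inequality into a statement directly accessible from Proposition \ref{cpl1}. Thus I expect the main "obstacle" to be simply identifying this substitution, after which the lemma follows from monotonicity of $g$ plus the already-stated proposition about $g(x)/x$.
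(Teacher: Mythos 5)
Your proof is correct, and it takes a noticeably different route from the paper's even though both start from the same reduction (verify $s\,w/g(w) - f(s) \leq w$ pointwise). The paper argues by cases on $s$: if $s \leq g(w)$ the bound is immediate, and if $s \geq g(w)$ it derives on the spot that $f(s)/s$ is non-decreasing (from convexity and $f(0)=0$), hence $f(s) \geq s\,w/g(w)$. You instead reparametrize by $u = f(s)$, turning the claim into $w\,g(u) \leq (u+w)\,g(w)$, and then chain monotonicity of $g$ with Proposition~\ref{cpl1} (the statement that $g(w)/w$ is decreasing). These are two views of the same underlying fact — ``$f(x)/x$ non-decreasing'' and ``$g(y)/y$ non-increasing'' are equivalent via $y=f(x)$ — but your version is arguably more economical for the paper because it reuses an already-stated proposition rather than re-deriving its dual inline, at the minor cost of the extra substitution step. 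One small point of care: the legitimacy of setting $s = g(u)$ for every $s \geq 0$ rests on $f$ being a bijection from $[0,\infty)$ onto $[0,\infty)$, which does hold here (a strictly convex increasing function with $f(0)=0$ is unbounded), so your aside about ``outside the range'' is unnecessary but harmless.
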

\begin{proof}
 From the definition of $f^*$, it is enough to show that for all $x,w$, we have: $\frac{wx}{g(w)} - f(x) \leq w$. Consider the following two cases. If $x \leq g(w)$, then the condition is trivially true. Now consider the case when $x \ge g(w)$. It follows that $f(x) / x$ is non-decreasing from convexity of $f$ and the fact $f(0) = 0$. Hence we have $f(x) / x \geq f(g(w)) / g(w) = w / g(w)$, which completes the proof. 
\end{proof}

\paragraph{Instantaneous Delay and Setting  Variables.} Recall that each machine runs \slaps$(k)$ with $k = 1/\epsilon$. Let $i^* = \sigma(j,t)$ denote the machine to which job $j$ is assigned at time $t$.  We define the {\em instantaneous weighted delay} induced by job $j$ on jobs ahead of $j$ in its virtual queue (the set $\J_j(t)$) and the weighted delay incurred by job $j$ due to jobs in $\J_j(t)$ as follows:
$$ \delta_j(t) =  \frac{1}{g_{i^*}(W(i^*,t))} \cdot \left(\sum_{j' \in \S_j(t)}  \left( w_{j'} \cdot \nu_{j}(t) + w_{j}\cdot\nu_{j'}(t) \right) + w_{j} \cdot \nu_{j}(t) \right) $$
Define  $ \Delta_j = \int_{t=r_j}^{C_j} \delta_j(t) dt$ as the cumulative weighted delay induced by $j$ on jobs ahead of it in its virtual queue. Again, it is easy to see that $\sum_j w_j F_j = \sum_j \Delta_j$. 

\medskip
We now perform the dual fitting. We set $\beta_{it} = \frac{1}{k} \cdot \frac{W(i,t)}{g_i(W(i,t))}$ and  $\alpha_j = \frac{1}{k+2} \Delta_j$. As before, we have:

\begin{eqnarray*}
\sum_{j} \alpha_j - \sum_{i,t}f_i^*\left(\frac{\beta_{it}}{1+3\epsilon}\right) &\geq&  \sum_{j} \frac{\Delta_j}{k+2} - \sum_{i,t} \frac{\epsilon \cdot W(i,t)}{1+3\epsilon} \qquad[\text{by convexity of $f^*_i$ and Lemma \ref{convex:property}}]\nonumber \\
&=&  \epsilon \left(\sum_{j} \frac{\Delta_j}{1+2\epsilon} - \sum_{i,t} \frac{W(i,t)}{1 + 3\epsilon} \right) \nonumber\\ 
&=& \epsilon \cdot \sum_j w_jF_j \cdot \left(\frac{1}{1+2\epsilon} - \frac{1}{1+3\epsilon}\right) = O(\epsilon^2) \sum_j w_jF_j
\end{eqnarray*}
Since the energy cost of {\sc SefishMigrate-Energy} is equal to total weighted flow-time, we get a $O(1/\epsilon^2)$-competitive algorithm with a speed-augmentation of $(1+\epsilon)$. 

\paragraph{Delay Lemmas.} Similar to the delay lemmas for total weighted flow-time, we establish corresponding delay lemmas. Define for any $t^* \in [r_j,C_j]$ the following:
$$ \Delta^1_j(t^*)  = \int_{t=r_j}^{t^*} \delta_j(t) dt  \qquad \mbox{and} \qquad \Delta^2_j(t^*)  = \int_{t=t^*}^{C_j} \delta_j(t) dt$$
The following lemma bounds the quantity $\Delta^1_j(t^*)$; the proof is  identical to the proof of Lemma \ref{firstdelay}:
\begin{lemma}
\label{powerDL2}
For any time instant $t^* \in [r_j,C_j]$ and for any job $j$, 
$$ \Delta^1_j(t^*) \leq \quad(k+2) \cdot w_{j}\cdot(t^*-r_j) $$
\end{lemma}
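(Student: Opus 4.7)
The plan is to replay the proof of Lemma~\ref{firstdelay} essentially verbatim, carefully tracking the single structural change: in \cpp the total machine speed is the data-dependent quantity $g_i(W(i,t))$ rather than the constant $\eta$, and this factor appears in the numerator of $\nu_j(t)$ (see \eqref{eqn:jobshare2}) as well as in the denominator of $\delta_j(t)$ (see the definition just before the lemma). Because these two occurrences cancel in the first line of the computation, the bound on $\delta_j(t)$ becomes independent of the power functions $g_{i^\ast}$, and what remains is exactly the same algebraic manipulation as in the flow-only case.

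Concretely, I would first write
\[
\delta_j(t) \;=\; \frac{1}{g_{i^\ast}(W(i^\ast,t))}\left(\nu_j(t)\,(\W_j(t)+w_j) \;+\; w_j \sum_{j'\in \S_j(t)}\nu_{j'}(t)\right),
\]
where $i^\ast = \sigma(j,t)$. Then I would substitute \eqref{eqn:jobshare2} for $\nu_j(t)$ and use the crude upper bound $\sum_{j'\in \S_j(t)}\nu_{j'}(t)\le g_{i^\ast}(W(i^\ast,t))$ (since the total speed partitioned among the jobs on the machine is $g_{i^\ast}(W(i^\ast,t))$). The factor $g_{i^\ast}(W(i^\ast,t))$ in front of $\nu_j(t)$ cancels the $1/g_{i^\ast}(W(i^\ast,t))$ outside, and we get
\[
\delta_j(t) \;\le\; \frac{(\W_j(t)+w_j)^{k+1}-\W_j(t)^{k+1}}{W(i^\ast,t)^{k+1}}\cdot(\W_j(t)+w_j) \;+\; w_j.
\]

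Next, using $W(i^\ast,t)\ge \W_j(t)+w_j$ (since the set $\S_j(t)\cup\{j\}$ is contained in $J_{i^\ast}(t)$) and Proposition~\ref{prop:lower} exactly as in the proof of Lemma~\ref{firstdelay}, I bound
\[
\delta_j(t) \;\le\; (\W_j(t)+w_j)\left(1-\Bigl(1-\tfrac{w_j}{\W_j(t)+w_j}\Bigr)^{k+1}\right)+w_j \;\le\; (k+1)w_j + w_j \;=\; (k+2)\,w_j.
\]
Integrating this uniform pointwise bound over $t\in[r_j,t^\ast]$ yields $\Delta^1_j(t^\ast)\le (k+2)\,w_j\,(t^\ast-r_j)$, as desired.

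There is no real obstacle here; the only thing to be careful about is the initial cancellation and the fact that we really do have $W(i^\ast,t)\ge \W_j(t)+w_j$ pointwise on $[r_j,C_j]$ (so that the base of the $(k+1)$-th power is at most $1$, making Proposition~\ref{prop:lower} applicable). The argument works for \emph{any} choice of machine power function because the $g_{i^\ast}$ factor is exactly what is needed to cancel, which is precisely the reason the speed-scaling rule $S(i,t)=g_i(W(i,t))$ and the rate-share rule \eqref{eqn:jobshare2} were set up with matching multiplicative factors.
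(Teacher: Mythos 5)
Your proof is correct and is exactly the argument the paper intends: the paper itself states that the proof of Lemma~\ref{powerDL2} is identical to that of Lemma~\ref{firstdelay}, and you have correctly observed that the $g_{i^*}(W(i^*,t))$ factor in the numerator of \eqref{eqn:jobshare2} cancels the $1/g_{i^*}(W(i^*,t))$ in the definition of $\delta_j(t)$, after which the algebra (the bound $W(i^*,t)\ge \W_j(t)+w_j$ followed by Proposition~\ref{prop:lower}) is verbatim the flow-only case. One small wording nit: the inequality $W(i^*,t)\ge \W_j(t)+w_j$ is what lets you pass from $1/W(i^*,t)^{k+1}$ to $1/(\W_j(t)+w_j)^{k+1}$, whereas the applicability of Proposition~\ref{prop:lower} only needs $0\le w_j/(\W_j(t)+w_j)\le 1$, which holds trivially since $w_j\ge 0$.
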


Let $p_j(t^*) = \int^{C_j}_{t = t^*}\ell_{\sigma(j,t)j} \cdot \nu_{j}(t) dt$ denote the residual size of job $j$ at time $t^*$. Next we establish the corresponding Second Delay Lemma.

\begin{lemma}
\label{powerDL1}
For any time instant $t^* \in [r_j,C_j]$ and for any job $j$,  let $i^* = \sigma(j,t^*)$ denote the machine to which job $j$ is assigned at time $t^*$. Then:
$$ \Delta^2_j(t^*) \le  \frac{k+2}{k+1} \cdot \frac{p_j(t^*)}{\phi(j,t^*)} \le  \frac{k+2}{k+1} \cdot  \frac{p_j}{l_{i^*j}} \cdot \frac{\W_j(t^*)+ w_j}{g_{i^*}(\W_j(t^*)+ w_j)} $$
\end{lemma}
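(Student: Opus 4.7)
The plan is to closely mirror the structure of the Second Delay Lemma (Lemma \ref{lem1}) for the no-energy case, with two substitutions: the uniform speed-up factor $\eta$ is now the instantaneous machine speed $g_i(W(i,t))$ appearing inside the integral, and the virtual utility has an extra multiplicative factor $g_i(\W_j(t)+w_j)$. Once we track how these two quantities interact, everything reduces to the same telescoping and convexity manipulations as in Lemma \ref{lem1}.

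Concretely, I would start by expanding $\delta_j(t)$ exactly as in the flow-time proof: group the terms into $\nu_j(t)(\W_j(t)+w_j) + w_j \sum_{j' \in \S_j(t)} \nu_{j'}(t)$, and use the telescoping identity
$\sum_{j' \in \S_j(t)} \nu_{j'}(t) = g_i(W(i,t)) \cdot \W_j(t)^{k+1} / W(i,t)^{k+1}$,
which follows directly from the rate rule \eqref{eqn:jobshare2}. After substitution, the factor $g_i(W(i,t))$ in the denominator of $\delta_j(t)$ cancels against the one in $\sum_{j'} \nu_{j'}$, and I can factor $\nu_j(t)/g_i(W(i,t))$ out of the integrand. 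The remaining bracketed term then becomes $(\W_j(t)+w_j) + \tfrac{w_j \W_j(t)^{k+1}}{(\W_j(t)+w_j)^{k+1} - \W_j(t)^{k+1}}$, and Proposition \ref{prop:upper} (applied with $w = \W_j(t)$ and $w' = w_j$) bounds this by $\tfrac{k+2}{k+1}(\W_j(t)+w_j)$, exactly as before.

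At this point the integrand is $\tfrac{k+2}{k+1} \cdot \tfrac{\nu_j(t)(\W_j(t)+w_j)}{g_i(W(i,t))}$. To convert this into a quantity involving $\phi(j,t)$, I use the fact $W(i,t) \ge \W_j(t) + w_j$ together with the monotonicity of $g_i$ to replace $g_i(W(i,t))$ by $g_i(\W_j(t)+w_j)$ in the denominator (this only makes the fraction larger). Multiplying numerator and denominator by $\ell_{ij}$, the integrand becomes $\tfrac{k+2}{k+1}\cdot \tfrac{\ell_{\sigma(j,t)j}\nu_j(t)}{\phi(j,t)}$. Now I invoke Lemma \ref{lem:powerincrease} (monotonicity of virtual utility) to pull $1/\phi(j,t^*)$ out of the integral, and the remaining integral $\int_{t^*}^{C_j} \ell_{\sigma(j,t)j}\nu_j(t)\,dt$ is exactly $p_j(t^*) \le p_j$. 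Expanding $\phi(j,t^*)$ via its definition yields the second inequality.

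The only place that requires genuine attention is the replacement step $g_i(W(i,t)) \to g_i(\W_j(t)+w_j)$; in the flow-time analysis the analogous substitution is trivial because $\eta$ is a constant, whereas here it relies crucially on monotonicity of $g_i$ and on the virtual queue ordering guaranteeing $W(i,t) \ge \W_j(t)+w_j$. Everything else---the telescoping, Proposition \ref{prop:upper}, and the pull-out of $\phi(j,t^*)$---transfers verbatim from the proof of Lemma \ref{lem1}. I do not foresee any genuine obstacle beyond carefully tracking where $g_i$ enters; this is really the payoff of having built the virtual utility to include the factor $g_i(\W_j(t)+w_j)/(\W_j(t)+w_j)$ in the first place, as hinted at in Proposition \ref{cpl1} and used in proving Lemma \ref{lem:powerincrease}.
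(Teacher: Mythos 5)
Your proposal follows the paper's proof step for step: the same telescoping substitution for $\sum_{j'\in\S_j(t)}\nu_{j'}(t)$, the same application of Proposition~\ref{prop:upper}, the same use of $W(i,t)\ge \W_j(t)+w_j$ with monotonicity of $g_i$ to trade $g_i(W(i,t))$ for $g_i(\W_j(t)+w_j)$, and the same appeal to Lemma~\ref{lem:powerincrease} to pull $1/\phi(j,t^*)$ out of the integral. You also correctly isolate the one genuinely new ingredient relative to Lemma~\ref{lem1} (the monotonicity-of-$g_i$ step), so the argument is both correct and essentially identical to the paper's.
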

\begin{proof}
{\small
\begin{eqnarray*}
\Delta^2_j(t^*) &=&  \int^{C_j}_{t = t^*} \frac{1}{g_{i^*}(W(i^*,t))} \cdot \left( \nu_{j}(t) \cdot (w_j + \W_{j}(t)) + w_{j} \cdot \left(\displaystyle \sum_{j' \in \J_{j}(t)} \nu_{j'}(t) \right) \right) dt   \nonumber \\
&=& \int^{C_j}_{t = t^*} \frac{1}{g_{i^*}(W(i^*,t))} \cdot \nu_{j}(t) \cdot \left( (w_j + \W_{j}(t)) + \frac{w_{j} \cdot \W_{j}(t)^{k+1}} {(\W_{j}(t) + w_j)^{k+1} - \W_{j}(t)^{k+1}} \right) dt \qquad \mbox{[From the def of $\nu$]}    \nonumber \\
&\leq& \int^{C_j}_{t = t^*} \frac{1}{g_{i^*}(W(i^*,t))} \cdot \nu_{j}(t) \cdot \left( (w_j + \W_{j}(t)) + \frac{w_j + \W_{j}(t)}{k+1} \right) dt  \qquad \mbox{[Proposition~\ref{prop:upper}]}\nonumber \\
&\leq& \frac{k+2}{k+1} \cdot \int^{C_j}_{t = t^*}  \frac{1}{g_{i^*}(\W_j(t) + w_j)} \cdot \nu_{j}(t) \cdot  (w_j + \W_{j}(t)) dt  \qquad \mbox{[$g$ is increasing]}\nonumber \\
&=&  \frac{k+2}{k+1} \cdot \int^{C_j}_{t = t^*} \ell_{i^*j}\cdot \nu_{j}(t)  \cdot \frac{1}{\phi(j,t)} dt \ \ \leq  \   \frac{k+2}{k+1} \cdot \frac{1}{\phi(j,t^*)} \cdot \int^{C_j}_{t = t^*} \ell_{i^*j}\cdot \nu_{j}(t) dt \qquad \mbox{[Lemma~\ref{lem:powerincrease}]}\nonumber \\
&=& \frac{k+2}{k+1} \cdot \frac{p_j(t^*)}{\phi(j,t^*)}  \le  \frac{k+2}{k+1} \cdot  \frac{p_j}{l_{i^*j}} \cdot \frac{\W_j(t^*)+ w_j}{g_{i^*}(\W_j(t^*)+ w_j)}
\end{eqnarray*}}
\end{proof}

From the previous two lemmas we get:
\begin{lemma}
\label{lem:powerdelay}
For any time instant $t \in [r_j,C_j]$ and job $j$ that is assigned to machine $i^* = \sigma(j,t)$, we have:
$$ \Delta_j = \Delta^1_j(t) + \Delta^2_j(t) \le (k+2) \cdot w_{j}\cdot(t-r_j) + \frac{k+2}{k+1} \cdot \frac{p_j}{l_{i^*j}} \cdot \frac{\W_j(t) + w_j}{g_{i^*}(\W_j(t) + w_j)} $$
\end{lemma}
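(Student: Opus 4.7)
The plan is to observe that this lemma is a direct combination of the two preceding delay lemmas, so essentially no new work is needed beyond a clean decomposition argument. First I would note that by definition
$$\Delta_j = \int_{t=r_j}^{C_j} \delta_j(s)\,ds = \int_{t=r_j}^{t} \delta_j(s)\,ds + \int_{t}^{C_j} \delta_j(s)\,ds = \Delta_j^1(t) + \Delta_j^2(t),$$
which is valid for any time instant $t \in [r_j, C_j]$. This decomposition is unambiguous because $\delta_j(s)$ is defined on the whole interval $[r_j, C_j]$ and the integrand is nonnegative (each machine runs \slaps$(k)$ with nonnegative rate allocation).

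Next I would apply \lemref{powerDL2} (the analog of the First Delay Lemma in the energy setting) to bound $\Delta_j^1(t) \le (k+2)\, w_j (t - r_j)$, and apply \lemref{powerDL1} (the Second Delay Lemma in the energy setting) to the job $j$ on its currently assigned machine $i^* = \sigma(j,t)$ to bound
$$\Delta_j^2(t) \le \frac{k+2}{k+1} \cdot \frac{p_j(t)}{\phi(j,t)} \le \frac{k+2}{k+1} \cdot \frac{p_j}{\ell_{i^*j}} \cdot \frac{\W_j(t) + w_j}{g_{i^*}(\W_j(t) + w_j)}.$$
Summing the two bounds yields precisely the claimed inequality.

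There is no real obstacle here since both bounds have already been established; the only minor point to double-check is consistency of notation, in particular that the machine $i^* = \sigma(j, t)$ used in the statement is the same $i^*$ used when invoking \lemref{powerDL1}, and that $p_j(t) \le p_j$ (which holds because $p_j(t)$ is the residual size at time $t$). Since both preceding lemmas hold for any $t \in [r_j, C_j]$, the resulting bound also holds for any such $t$, as required. Thus the proof is essentially a one-line combination of \lemref{powerDL2} and \lemref{powerDL1}.
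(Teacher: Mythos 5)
Your proof is correct and matches the paper's intent exactly: the paper simply states that the lemma follows ``from the previous two lemmas,'' and your decomposition $\Delta_j = \Delta_j^1(t) + \Delta_j^2(t)$ followed by applying Lemma~\ref{powerDL2} and Lemma~\ref{powerDL1} is precisely that one-line combination. Nothing further is needed.
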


\paragraph{Checking the Feasibility of \ref{Dual2} Constraints.}
Now it  remains to prove that  constraints of \ref{Dual2} are satisfied.  To see this, fix job $j$ and time instant $t$. We consider two cases.

\noindent \textbf{Case 1:} Machine $i = \sigma(j,t)$. Then
{\small
\begin{eqnarray*}
\alpha_j - \frac{p_j}{\ell_{ij}}\beta_{it} &=& \frac{\Delta_j}{k+2} - \frac{p_j}{\ell_{ij}}\cdot \frac{1}{k} \cdot \frac{W(i,t)}{g_i(W(i,t))}  \nonumber \\
&\leq& w_{j} \cdot (t-r_j) + \frac{p_j}{\ell_{ij}} \cdot \frac{1}{k+1} \cdot \frac{\W_{j}(t)+ w_{j}}{g_i(\W_{j}(t)+ w_{j})} - \frac{p_j}{\ell_{ij}}\cdot \frac{1}{k} \cdot \frac{W(i,t)}{g_i(W(i,t))}  \qquad \mbox{[Lemma~\ref{lem:powerdelay}]} \nonumber \\
&\leq& w_{j} \cdot (t - r_j) \qquad [\mbox{Proposition \ref{cpl1}}]\nonumber
\end{eqnarray*}}

\noindent \textbf{Case 2:} Machine $i \neq \sigma(j,t)$. Then
{\small
\begin{eqnarray*}
\alpha_j - \frac{p_j}{\ell_{ij}}\beta_{it} &=& \frac{\Delta_j}{k+2} - \frac{p_j}{\ell_{ij}}\cdot \frac{1}{k} \cdot \frac{W(i,t)}{g_i(W(i,t))} \nonumber \\
&\leq&w_{j} \cdot (t-r_j) + \frac{p_j}{\ell_{\sigma(j,t)j}} \cdot \frac{1}{(k+1)} \cdot \frac{\W_{j}(t)+ w_{j}}{g_{\sigma(j,t)}(\W_{j}(t)+ w_{j})} - \frac{p_j}{\ell_{ij}}\cdot \frac{1}{k} \cdot \frac{W(i,t)}{g_i(W(i,t))}  \qquad \mbox{[Lemma~\ref{lem:powerdelay}]} \nonumber \\
&\leq& w_{j} \cdot (t-r_j) + \frac{p_j}{\ell_{\sigma(j,t)j}} \cdot \frac{1}{(k+1)} \cdot \frac{\W_{j}(t)+ w_{j}}{g_{\sigma(j,t)}(\W_{j}(t)+ w_{j})} - \frac{p_j}{\ell_{ij}}\cdot \frac{1}{k} \cdot \frac{(W(i,t)+w_j-w_j)}{g_i(W(i,t) + w_j)} \hspace{2mm} [\mbox{since }  g \mbox{ is increasing}] \nonumber \\
& \leq & w_{j} \cdot (t-r_j) +  \frac{p_j}{\ell_{ij}} \cdot \frac{1}{(k+1)} \cdot \frac{w_{j}}{g_i(W(i,t)+ w_{j})} \qquad [\mbox{since $\sigma(j,t)$ maximizes virtual utility}]  \nonumber \\
& \leq & w_{j} \cdot (t-r_j) +  \frac{p_j}{\ell_{ij}} \cdot \frac{1}{(k+1)} \cdot \frac{w_{j}}{g_i(w_{j}) }  \qquad [\mbox{since }  g \mbox{ is increasing}] \nonumber  \\
&\leq& w_{j}\cdot(t-r_j) +\frac{p_j}{\ell_{ij}} \cdot (f_i^*)^{-1}(w_j) \qquad [\mbox{Lemma~\ref{convex:property}}]\nonumber \nonumber
\end{eqnarray*}}

\paragraph{Polynomial Power Functions.} As a corollary of the above result, we get a $O(\gamma^2)$-competitive algorithm when each machine follows the power function $s^{\gamma}$.

\begin{corollary}
There is $O(\gamma^2)$-competitive non-clairvoyant algorithm for minimizing weighted flow-time plus energy, when each machine follows a polynomial power function $f(s) = s^\gamma$.
\end{corollary}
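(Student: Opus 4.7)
The plan is to reduce the corollary to Theorem~\ref{thm:energy-scalable} by using the fact that for a polynomial power function $f(s)=s^\gamma$, a small amount of extra speed can be paid for by a constant factor increase in energy. Concretely, I would instantiate Theorem~\ref{thm:energy-scalable} with $\epsilon = 1/\gamma$, which yields a $(1+1/\gamma)$-speed, $O(\gamma^2)$-competitive non-clairvoyant algorithm for the sum of weighted flow-time plus energy on unrelated machines whose power functions are arbitrary strictly convex functions with $f_i(0)=0$, in particular for $f_i(s)=s^\gamma$.

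Next I would eliminate the speed augmentation. Run the same algorithm \textsc{SelfishMigrate-Energy} but pretend that the machines are a factor $(1+1/\gamma)$ slower than they actually are, so that what the algorithm believes is its ``$(1+1/\gamma)$-augmented'' speed $s$ is in fact the real speed at which the machine runs. The weighted flow-time of this actual schedule is identical to the flow-time achieved in the analysis of Theorem~\ref{thm:energy-scalable}, so it is at most $O(\gamma^2)$ times the optimal cost with no augmentation. The energy paid in the actual schedule, however, is $\int_t ((1+1/\gamma) s_{it})^\gamma \, dt = (1+1/\gamma)^\gamma \int_t s_{it}^\gamma \, dt$, where $s_{it}$ is the speed used in the analysis. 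Since $(1+1/\gamma)^\gamma \le e$ for all $\gamma\ge 1$, the real energy is at most a constant factor larger than the energy charged in the analysis.

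Combining these two observations, the algorithm's total objective (real weighted flow-time plus real energy) is at most $e$ times the objective bounded by the analysis of Theorem~\ref{thm:energy-scalable}, which is at most $O(\gamma^2)$ times the offline optimum without any resource augmentation. This gives the claimed $O(\gamma^2)$-competitive ratio. No step is really an obstacle here; the only thing to double-check is that the conversion from ``$(1+\epsilon)$-speed with the benchmark paying $f_i((1+\epsilon) s)$ in the convex program'' in the analysis of Theorem~\ref{thm:energy-scalable} translates correctly to ``no augmentation, but algorithm pays $(1+\epsilon)^\gamma$ times more energy,'' which is immediate from the form of $f_i(s)=s^\gamma$.
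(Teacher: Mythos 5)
Your proposal is correct and rests on the same key observation as the paper's (commented-out) proof: set $\epsilon = 1/\gamma$, so that the factor $(1+\Theta(\epsilon))^\gamma$ arising from the polynomial power function is $O(1)$.  The only difference is where this constant factor is absorbed.  The paper's one-line argument works at the level of the convex relaxation: the energy term in \cpp{} is $f_i((1+3\epsilon) s_{it}) = (1+3\epsilon)^\gamma s_{it}^\gamma$, so with $\epsilon = 1/\gamma$ the value of the relaxation evaluated at any schedule is within a constant of its true cost, and the $O(1/\epsilon^2) = O(\gamma^2)$ dual-fitting guarantee against \cpp{} immediately becomes an $O(\gamma^2)$ guarantee against the un-augmented optimum.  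You instead absorb the factor on the algorithm side: replay the analyzed schedule at its unaugmented speeds, which keeps the flow-time identical and inflates the energy by $(1+1/\gamma)^\gamma \le e$.  These are two standard ways of cashing out speed augmentation for polynomial power functions and give the same bound; yours has the slight advantage of being a black-box reduction that does not require looking inside the dual-fitting argument, but it requires being a bit careful that ``pretending the machines are slower'' does not change the algorithm's migration decisions (it does not, since the virtual utilities only scale uniformly).  Either route is fine, and both are essentially the argument the authors had in mind.
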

\iffalse
\begin{proof}
Consider the convex programming formulation \cpp and set $\epsilon = 1/\gamma$. Then, the CP is only $O(1)$ more than any optimal solution. Thus, we get $O(\gamma^2)$-competitive algorithm.
\end{proof}
\fi

\section*{Acknowledgments}
We thank Naveen Garg, Benjamin Moseley, Ravishankar Krishnaswamy, Anupam Gupta,
Minghong Lin, Michael Nugent, and Neal Barcelo for initial discussions
about this line of research over the last few years.

\bibliographystyle{plain}
\bibliography{mmf}
\appendix

\end{document}